\documentclass{ifacconf}

\usepackage{natbib}        

\usepackage{epsfig} 
\usepackage{subfig}
\usepackage{graphicx}    
\usepackage{epstopdf} 
\usepackage{amsmath} 
\usepackage{amssymb}  
\usepackage{multirow}
\usepackage[dvipsnames]{xcolor}
\usepackage{soul}
\usepackage[cmintegrals]{newtxmath}
\usepackage{float}
\usepackage{mathrsfs}
\usepackage{algorithmic}
\usepackage{algorithm}
\usepackage{subdepth}

\renewcommand{\ge}{\geq}

\renewcommand{\le}{\leq}

\newcommand\T{\rm T}
\newcommand{\mR}
{\mathbb{R}}

\usepackage{multirow}
\usepackage{color}

\DeclareMathAlphabet\mathbfcal{OMS}{cmsy}{b}{n}

\definecolor{cover}{RGB}{72,0,255}
\definecolor{gold}{RGB}{204,163,0}
\definecolor{darkblue}{RGB}{0,21,120}
\definecolor{rred}{RGB}{225,81,18}
\definecolor{dred}{RGB}{170,60,10}
\definecolor{bblue}{RGB}{0,115,189}
\definecolor{ggreen}{RGB}{0,160,100}
\definecolor{dgreen}{RGB}{0,99,61}
\definecolor{lblue}{RGB}{77,191,237}

\definecolor{d2red}{RGB}{215,48,39}
\definecolor{d2orange}{RGB}{252,141,89}
\definecolor{d2yellow}{RGB}{254,224,144}
\definecolor{d2lblue}{RGB}{224,243,248}
\definecolor{d2blue}{RGB}{145,191,219}
\definecolor{d2dblue}{RGB}{69,117,180}

\newtheorem{Lemma}{Lemma}
\newtheorem{Theorem}{Theorem}
\newtheorem{corollary}{Corollary}
\newtheorem{definition}{Definition}

\newtheorem{Remark}{Remark}

\begin{document}

\begin{frontmatter}

\title{Performance Analysis of Time-Delay Systems
under External Perturbations Using Output-to-Output Gain}

\author[UU]{Ruslan Seifullaev},
\author[UU]{Andr\'{e} M. H. Teixeira}

\address[UU]{Division of Systems and Control, Department of Information Technology, 
Uppsala University, Sweden (e-mail: \{ruslan.seifullaev, andre.teixeira\}@it.uu.se).}

\begin{abstract}
Communication delays are inherent in networked control systems and may significantly affect closed-loop performance. This paper investigates the impact of external perturbations in observer-based linear systems with delayed measurements and control signals using the output-to-output gain (OOG) framework. By employing dissipativity theory and Lyapunov--Krasovskii functionals, delay-dependent linear matrix inequality conditions are derived that provide explicit upper bounds on the OOG for systems with independent delays in measurement and actuation channels. 
In addition, two special classes of constant-delay systems are considered: Pad\'e-approximated systems and finite-dimension reducible systems. Numerical examples illustrate the applicability of the proposed approaches.
\end{abstract}

\begin{keyword}
Time-delay systems, networked control, performance analysis, dissipativity, output-to-output gain
\end{keyword}

\end{frontmatter}


\section{Introduction}

\begin{figure*}[h!]
\centering
\includegraphics[scale=0.4]{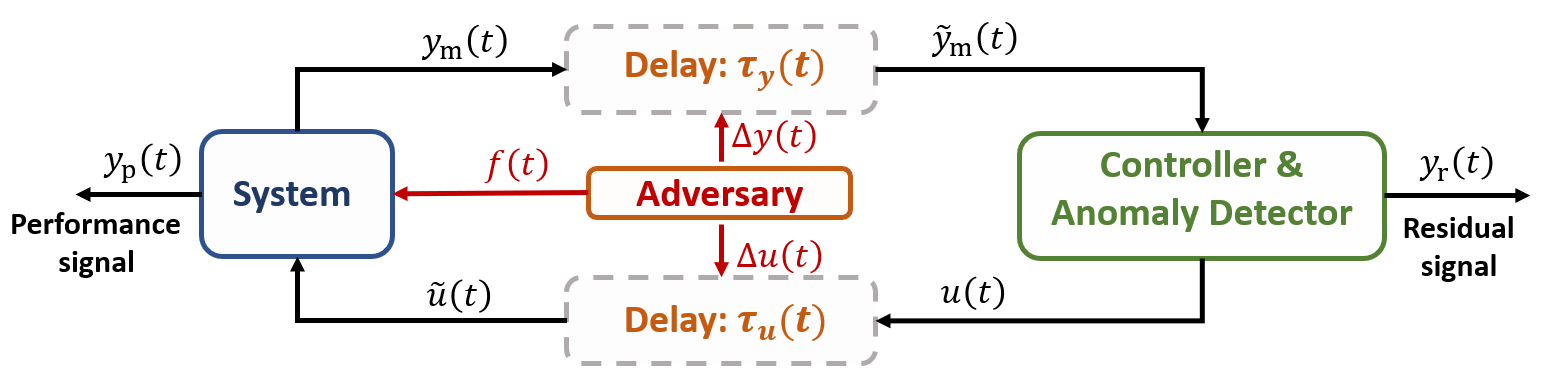}
\caption{A time-delay system under external perturbations}
\label{diagram}
\end{figure*}

Networked control architectures are increasingly employed in modern engineering systems, including power networks, industrial automation, transportation systems, and cyber-physical infrastructures. In many of these applications, sensors, actuators, and control units are distributed and interconnected through shared communication networks. Such networked control systems (NCSs) provide substantial benefits in terms of flexibility, scalability, and reduced implementation costs. At the same time, the presence of communication constraints may significantly affect closed-loop stability and performance. Among all network phenomena, communication delays constitute one of the most important and widely studied challenges in NCSs. Delays naturally arise due to sensing, routing, scheduling, and transmission processes and may affect both measurement and actuation channels. Even when the underlying plant is linear and nominally stable, delays may degrade transient performance, reduce robustness margins, and potentially destabilize the closed-loop system. Consequently, the analysis and control of time-delay systems have attracted considerable attention over the past decades; see, e.g., \cite{Fridman_book,Kharitonov13,chen,Richard03} and the references therein.

In addition to network-induced imperfections, modern NCSs are exposed to a variety of faults, disturbances, and malicious activities. Hardware degradation, sensor malfunctions, software errors, communication failures, or externally generated perturbations may alter the information exchanged between different components of the control system. Such perturbations may affect sensor measurements, actuator signals, or directly influence the physical process itself. While many abnormal operating conditions can be detected using conventional monitoring schemes, certain perturbations may remain weakly visible to residual-based detectors while still causing significant degradation of closed-loop performance. Understanding the worst-case impact of such perturbations is therefore an important problem from both reliability and security perspectives.

A large body of literature has addressed fault diagnosis, fault-tolerant control, and security of networked control systems; see, e.g., \cite{Cardenas08,Sandberg15,ZHAO2020109128,24_Mousavinejad} and references therein. Typical performance and robustness measures include the $H_\infty$ norm, \cite{Zhou96}, induced-gain analysis, and fault sensitivity indices such as the $H_-$ index, \cite{WANG07}. Although these metrics provide valuable information about disturbance amplification or fault detectability, they generally evaluate impact and detectability separately. As a result, they do not directly quantify the trade-off between maintaining a small residual signal and simultaneously causing a large degradation of system performance.

To address this limitation, the output-to-output gain (OOG) was introduced in \cite{Teixeira2015}. The OOG quantifies the worst-case trade-off between performance degradation and detectability by characterizing the maximal energy of the performance output that can be generated while keeping the residual output energy bounded. In this way, the metric captures the fundamental objective of stealthy perturbations: maximizing performance degradation while remaining difficult to detect. Existing OOG results, however, have been developed primarily for finite-dimensional delay-free systems. To the best of the authors' knowledge, corresponding results for time-delay systems have not yet been established.

The extension of OOG analysis to delayed systems is nontrivial. Unlike ordinary differential equations, time-delay systems are inherently infinite-dimensional, since their future evolution depends on the entire history of the state trajectory over the delay interval. Consequently, techniques developed for finite-dimensional systems cannot be applied directly. Delay-dependent analysis typically relies on Lyapunov--Krasovskii functionals and functional differential equation theory, leading to significantly different mathematical tools and performance conditions.

Motivated by these observations, this paper investigates performance degradation in observer-based networked control systems with communication delays and external perturbations. The considered framework includes independent delays in measurement and actuation channels, additive perturbations entering communication links, and physical disturbances acting directly on the plant dynamics. The objective is to quantify the worst-case amplification from the residual output to the performance output through the output-to-output gain.

To this end, we extend the dissipativity-based OOG framework to time-delay systems. Using Lyapunov--Krasovskii functionals, the descriptor approach, and reciprocally convex inequalities, we derive delay-dependent linear matrix inequality (LMI) conditions guaranteeing output strict dissipativity and yielding explicit upper bounds on the OOG. In addition, we discuss two special classes of constant-delay systems for which finite-dimensional representations can be employed: systems approximated using Pad'e delay models and finite-dimension reducible systems in the sense of \cite{Churilov13}.

The main contributions of this paper can be summarized as follows:
\begin{itemize}
\item {\it Output-to-output gain analysis for time-delay systems}. We extend the OOG framework to observer-based linear systems with communication delays and formulate the corresponding performance analysis problem in an infinite-dimensional setting.

\item {\it Delay-dependent dissipativity conditions}. By employing Lyapunov--Krasovskii functionals together with descriptor and reciprocally convex techniques, we derive computationally tractable LMI conditions that provide explicit upper bounds on the OOG.

\item {\it Finite-dimensional reductions}. We investigate two special cases involving constant delays. For Pad'e-approximated systems, existing delay-free OOG results can be applied directly. For finite-dimension reducible systems, we establish conditions under which the delayed dynamics admit an exact finite-dimensional representation and derive corresponding OOG estimates.
\end{itemize}

The remainder of the paper is organized as follows. Section~II presents the problem formulation and derives the delayed observer-based closed-loop model. Section~III develops the dissipativity-based OOG analysis and provides delay-dependent LMI conditions. Section~IV discusses special cases based on Pad'e approximations and finite-dimensional reducibility. Section~V presents numerical examples illustrating the proposed methods. Finally, Section~VI concludes the paper.

\section{Problem Formulation}
Consider the following system:
\begin{equation}\label{sys_lin}
\begin{aligned}
\dot x_{\rm p}(t) &= A_{\rm p}x_{\rm p}(t)+B_{\rm p}\tilde u(t),\\
y_{\rm m}(t) &=C_{\rm m,o}x_{\rm p}(t),\quad y_{\rm p}(t) =C_{\rm p,o}x_{\rm p}(t)+D_{\rm p,o}\tilde u(t),
\end{aligned}
\end{equation}
where $x_{\rm p}(t) \in \mR^{n_x}$ is the state vector, $\tilde u(t) \in \mR^{n_u}$ is the control input, $y_{\rm m}(t) \in \mR^{n_y}$ is the measurement output, $y_{\rm p}(t) \in \mR^{n_{y_{\rm p}}}$ is the performance output, and $A_{\rm p}$, $B_{\rm p}$, $C_{\rm m,o}$, $C_{\rm p,o}$ and $D_{\rm p,o}$ are the matrices of appropriate dimensions, the pair $(A_{\rm p}, B_{\rm p})$ is controllable and the pair $(A_{\rm p}, C_{\rm m,o})$ is observable. 

We consider the scenario in which the network channels are subject to additive signal perturbations $\Delta y(t)=\Gamma_{\rm y}\tilde a_{\rm y}(t)$ and $\Delta u(t)=\Gamma_{\rm u}\tilde a_{\rm u}(t)$ acting on the sensor and actuator signals, respectively, possibly inserted by an adversary or originating from an external source. Here the matrices $\Gamma_{\rm y}$ and $\Gamma_{\rm u}$ specify the channels into which the additive signals $\tilde a_{\rm y}(t)\in\mR^{n_{a_{\rm y}}}$ and $\tilde a_{\rm u}(t)\in\mR^{n_{a_{\rm u}}}$ may enter and their structure.
Additionally, we assume that $f(t)=\Gamma_{\rm p} a_{\rm p}(t)$ represents a physical fault signal acting directly on the plant dynamics, where $a_{\rm p}(t)\in\mR^{n_{a_{\rm p}}}$, and $\Gamma_{\rm p}$ is the corresponding fault distribution matrix determining how the disturbance enters the state dynamics, see Figure~\ref{diagram}.

We assume that the communication channels are subject to time-varying delays. In particular, the sensor signal $y_{\rm m}(t)$ is delayed by $\tau_{\rm y}(t)$, while the controller output $u(t) \in \mR^{n_u}$ is delayed by $\tau_{\rm u}(t)$. We also assume that the delays are bounded and slowly-varying, i.e., 
\begin{equation}
\begin{aligned}
\tau_{\rm y}(t)&\in[0,h_{\rm y}], \,\,\, \dot\tau_{\rm y}(t)\le d_{\rm y}, \\
\tau_{\rm u}(t)&\in[0,h_{\rm u}], \,\,\, \dot\tau_{\rm u}(t)\le d_{\rm u},
\end{aligned}
\end{equation}
where $h_{\rm y}>0$, $h_{\rm u}>0$, $d_{\rm y}\le1$, and $d_{\rm u}\le1$.
The external perturbations $\tilde a_{\rm y}(t)$ and $\tilde a_{\rm u}(t)$ may also experience delays along the communication paths. We denote the corresponding received signals by 
$$a_{\rm y}(t)=\tilde a_{\rm y}(t-\tilde\tau_{\rm y}(t)) \quad \mbox{and} \quad a_{\rm u}(t)=\tilde a_{\rm u}(t-\tilde\tau_{\rm u}(t)),$$ respectively\footnote{The delays $\tilde\tau_{\rm y}(t)$ and $\tilde\tau_{\rm u}(t)$ need not coincide with  $\tau_{\rm y}(t)$ and $\tau_{\rm u}(t)$, since the measurement/control signals and the external perturbations may propagate through different communication routes. In this paper, however, we do not address the design of the perturbation signals themselves and focus only on their worst-case impact. Consequently, it suffices to model the received signals $a_{\rm y}(t)$ and $a_{\rm u}(t)$ directly; the exact values of the delays are immaterial, as delayed signals belong to the same signal class as the original inputs.}\!.
Therefore, the signal received by the controller is
\begin{equation}\label{att_additive_y}
\tilde y_{\rm m}(t) = y_{\rm m}(t-\tau_{\rm y}(t)) + \Gamma_{\rm y} a_{\rm y}(t),
\end{equation}
while the signal applied at the actuator is
\begin{equation}\label{att_additive_u}
\tilde u(t) = u(t-\tau_{\rm u}(t)) + \Gamma_{\rm u} a_{\rm u}(t),
\end{equation}
where $u(t)$ is the controller output, see Figure~\ref{diagram}. 

We consider the following observer-based controller structure:
\begin{equation}\label{obs_lin}
\begin{aligned}
\dot{\hat x}_{\rm p}(t) &= A_{\rm p}\hat x_{\rm p}(t)+B_{\rm p}u(t)+Ky_{\rm r}(t),\,\,\,\,
u(t) = L\hat x_{\rm p}(t),\\
\hat y_{\rm m}(t) &=C_{\rm m,o}\hat x_{\rm p}(t),\,\,\,\,
y_{\rm r}(t) = \tilde y_{\rm m}(t)-\hat y_{\rm m}(t),
\end{aligned}
\end{equation}
where $K\in\mR^{n_x\times n_y}$ and $L\in\mR^{n_u\times n_x}$ are the observer and controller gains, respectively.
Define the augmented vectors $
x(t)=  \begin{bmatrix}
x_{\rm p}(t)\\
x_{\rm p}(t)-\hat x_{\rm p}(t)\\
\end{bmatrix}
$ and
$
a(t)=  \begin{bmatrix}
a_{\rm u}(t)\\
a_{\rm y}(t)\\
a_{\rm p}(t)\\
\end{bmatrix}.
$
Then the resulting closed-loop system can be written as follows:
\begin{equation}\label{sys_nonl_cl}
\begin{aligned}
\dot x(t) &= Ax(t)+A_1x(t-\tau_{\rm u}(t))+A_2x(t-\tau_{\rm y}(t))+Ba(t),\\ 
y_{\rm p}(t) &= C_{\rm p}x(t)+C_1x(t-\tau_{\rm u}(t))+D_{\rm p}a(t),   \\
y_{\rm r}(t) &=C_{\rm r}x(t)+C_2x(t-\tau_{\rm y}(t))+D_{\rm r}a(t),
\end{aligned}
\end{equation}
where
\begin{equation*}
\begin{aligned}
 A & = \begin{bmatrix}
A_{\rm p}&0\\
-B_{\rm p}L+KC_{\rm m,o}\,\,\,&A_{\rm p}+B_{\rm p}L-KC_{\rm m,o}\\
\end{bmatrix},\\
A_1  &= \begin{bmatrix}
B_{\rm p}L\,\,\,&-B_{\rm p}L\\
B_{\rm p}L\,\,\,&-B_{\rm p}L\\
\end{bmatrix},\,\,
A_2  = \begin{bmatrix}
0&0\\
-KC_{\rm m,o}&0\\
\end{bmatrix},\\
B &= \begin{bmatrix}
B_{\rm p}\Gamma_{\rm u}&0&\Gamma_{\rm p}\\
B_{\rm p}\Gamma_{\rm u}&-K\Gamma_{\rm y}&\Gamma_{\rm p}\\
\end{bmatrix}, \,\,D_{\rm p}  = \left[D_{\rm p,o}\Gamma_{\rm u},\, 0,\, 0\right],\\
C_{\rm p}  &= \left[C_{\rm p,o},\, 0\right], \,\,C_1 = \left[D_{\rm p,o}L,\, -D_{\rm p,o}L
\right],\\
C_{\rm r}  &= \left[-C_{\rm m,o},\, C_{\rm m,o}\right], \,\,C_2 = \left[C_{\rm m,o},\, 0
\right],\,\,D_{\rm r}  = \left[0,\, \Gamma_{\rm y},\, 0\right].
\end{aligned}
\end{equation*}

Thus, we obtain a time-delay system with two outputs, $y_{\rm p}$ and $y_{\rm r}$, driven by the external input $a$.
In practice, the residual signal $y_{\rm r}$ is commonly used in monitoring and anomaly detection schemes, which raise an alarm whenever the residual energy $||y_{\rm r}||^2_{L_2}$ exceeds a prescribed threshold. Consequently, perturbations that keep the residual energy small may remain weakly visible to such monitoring mechanisms while still significantly degrading closed-loop performance.
Motivated by this observation, the main objective of this paper is to characterize the worst-case impact of external perturbations in terms of the maximal achievable energy of the performance output $y_{\rm p}$ under constraints on the residual output energy $||y_{\rm r}||_{L_2}^2$.

\section{Main result: performance analysis with the output-to-output gain}
Since the closed-loop system \eqref{sys_nonl_cl} contains time-varying delays, it is inherently infinite-dimensional. 
Consequently, its state at time $t$ is characterized not only by the current value $x(t)$ but by the entire state trajectory over the delay interval. 
Following the standard framework for functional differential equations, see \cite{Fridman_book,chen}, we define the state segment
\[
x_t(\theta)=x(t+\theta), \qquad \theta\in[-h,0],
\]
associated with the trajectory $x(t)$, 
where
$
h=\max\{h_{\rm u},h_{\rm y}\}
$
is the maximal admissible delay. 
For the subsequent Lyapunov--Krasovskii analysis, we assume that the state segments $x_t$ belong to the function space
$W$, which denotes the Banach space of  absolutely continuous functions on $[-h,0]$ whose derivatives are square integrable. The space $W$ is equipped with the norm
$
\|z(\theta)\|_W=
\max_{\theta\in[-h,0]}
|z(\theta)|
+
\left(
\int_{-h}^{0}
|\dot z(s)|^2 ds
\right)^{\frac12}.
$

Without loss of generality, we extend the function $x(t)$ to the interval $[-h,0)$ by zero, i.e.,
$
x(t)=0$, for $t\in[-h,0].$
This assumption is standard in input-output analysis of time-delay systems and allows the effect of the external perturbation input $a(t)$ on the closed-loop performance to be isolated from the influence of nonzero initial trajectories.

As discussed in the previous section, the residual output $y_{\rm r}(t)$ is used as a monitoring signal, while the performance output $y_{\rm p}(t)$ characterizes the degradation of the closed-loop behavior. 
Consequently, we are interested in quantifying the maximal amplification from the residual output to the performance output for the delayed system \eqref{sys_nonl_cl}. 
A security metric that combines both performance impact and attack detectability was introduced in \cite{Teixeira2015,Teixeira2021}, termed the output-to-output gain (OOG). 
The OOG metric is formulated as the optimal control problem:
\begin{equation}\label{OOG}
OOG\triangleq  \sup\limits_{a\in L_{2e}, x_0=0}||y_{\rm p}||^2_{L_2}, \quad \mbox{s.t.}\quad ||y_{\rm r}||^2_{L_2}\le1,
\end{equation}
where $L_{2e}$ is the extended $L_2$ space, defined as
$$
L_{2e} = \left\{ a: \mR_+ \to \mR^{n_a} \,\big|\, \|a\|_{L_{2[0,T]}} < \infty,\ \forall T < \infty \right\},
$$
which represents all signals that are square integrable over finite periods of time.
Therefore, the OOG characterizes the worst-case performance degradation achievable under the constraint that the residual energy remains bounded.

To estimate the output-to-output gain of the delayed system \eqref{sys_nonl_cl}, we employ a dissipativity-based approach. 
More specifically, we extend the notion of output strict dissipativity from \cite{Khalil}, originally formulated for finite-dimensional systems with Lyapunov storage functions, to time-delay systems described by Lyapunov--Krasovskii functionals. 
This extension allows the infinite-dimensional dynamics induced by the delays to be incorporated directly into the performance analysis framework and provides a convenient basis for deriving delay-dependent LMI conditions for the OOG.
\begin{definition}
The closed-loop time-delay system \eqref{sys_nonl_cl} is said to be {\it output strictly dissipative} with respect to a supply rate $w(x_t,a(t))$ if there exists a positive semidefinite Lyapunov--Krasovskii functional
$
V(t,x_t,\dot x_t),
$
such that the function
\[
\bar V(t):=V(t,x_t,\dot x_t)
\]
is differentiable along the solutions of \eqref{sys_nonl_cl} and satisfies
\begin{equation}\label{def_dissipativity_delay}
\dot{\bar V}(t)
\le
w(x_t,a(t))
-
\gamma y_{\rm p}^{\T}(t)y_{\rm p}(t),
\qquad \gamma>0.
\end{equation}
\end{definition}
\begin{Lemma}\label{Lem_dissipativity_delay}
Assume that the closed-loop time-delay system \eqref{sys_nonl_cl} is output strictly dissipative with respect to the supply rate
$
w(x_t,a(t))
=
y_{\rm r}^{\T}(t)y_{\rm r}(t).
$
Then it is finite-gain $L_2$ output-to-output stable, i.e.,
\begin{equation}\label{OOStab_delay}
\|y_{\rm p}\|_{L_2}^2
\le
\frac{1}{\gamma}
\|y_{\rm r}\|_{L_2}^2
+
\beta(x_0),
\end{equation}
where
$
\beta(x_0)
=
\frac{1}{\gamma}\bar V(0)
\ge0.
$
In addition,
$
OOG\le\frac{1}{\gamma}.
$
\end{Lemma}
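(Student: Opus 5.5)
The plan is to integrate the dissipation inequality \eqref{def_dissipativity_delay} over a finite horizon, use nonnegativity of the Lyapunov--Krasovskii functional to discard the terminal term, and then let the horizon tend to infinity.

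\textbf{Integrating the inequality.} Fix an arbitrary input $a\in L_{2e}$ and a horizon $T>0$. Along the corresponding solution of \eqref{sys_nonl_cl}, the function $\bar V(t)$ is differentiable by assumption. Integrating \eqref{def_dissipativity_delay} with $w=y_{\rm r}^{\T}y_{\rm r}$ over $[0,T]$ gives
\[
\bar V(T)-\bar V(0)\le \int_0^T y_{\rm r}^{\T}(t)y_{\rm r}(t)\,dt-\gamma\int_0^T y_{\rm p}^{\T}(t)y_{\rm p}(t)\,dt .
\]
Because $V$ is positive semidefinite, $\bar V(T)\ge 0$, so this term can be dropped. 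Dividing by $\gamma>0$ then yields
\[
\|y_{\rm p}\|_{L_{2[0,T]}}^2\le \frac{1}{\gamma}\|y_{\rm r}\|_{L_{2[0,T]}}^2+\frac{1}{\gamma}\bar V(0).
\]

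\textbf{Passing to the limit.} Both truncated norms are nondecreasing in $T$. If $\|y_{\rm r}\|_{L_2}<\infty$, the right-hand side is bounded uniformly in $T$. Letting $T\to\infty$ therefore gives \eqref{OOStab_delay}, with $\beta(x_0)=\bar V(0)/\gamma\ge 0$; nonnegativity again comes from semidefiniteness of $V$. If $\|y_{\rm r}\|_{L_2}=\infty$, the inequality holds trivially.

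\textbf{Bounding the OOG.} Under the standing assumption $x(t)=0$ for $t\in[-h,0]$, the initial segment satisfies $x_0\equiv 0$ and $\dot x_0\equiv 0$. I must check that $V(0,0,0)=0$, so that $\beta=0$. This holds for the quadratic LK functionals the paper uses, whose terms are all integrals or quadratic forms of $x_t$ and $\dot x_t$. For a general semidefinite functional I would state it explicitly as the natural requirement $V(t,0,0)=0$. Then every admissible $a$ with $\|y_{\rm r}\|^2_{L_2}\le 1$ satisfies $\|y_{\rm p}\|_{L_2}^2\le 1/\gamma$. Taking the supremum in \eqref{OOG} gives $OOG\le 1/\gamma$.

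\textbf{Main obstacle.} The only delicate point is justifying $\bar V(0)=0$ from zero initial data, together with the fact that integrating the derivative is legitimate. The latter follows because $\bar V$ is assumed differentiable, hence absolutely continuous on compact intervals for the solution classes in $W$. I would handle the first point by noting that the functional vanishes at the zero segment.
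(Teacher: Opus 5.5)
Your proposal is correct and follows the paper's proof essentially step for step: integrate the dissipation inequality over $[0,T]$, discard $\bar V(T)\ge 0$, let $T\to\infty$, and use $\bar V(0)=0$ for zero initial data before taking the supremum over admissible $a$. Your additions (the monotone-limit argument and the explicit requirement $V(t,0,0)=0$) make precise what the paper only asserts; one quibble is that everywhere-differentiability does not by itself give absolute continuity, but the integration step is still valid because a differentiable $\bar V$ whose derivative is bounded above by the locally integrable function $y_{\rm r}^{\T}y_{\rm r}-\gamma y_{\rm p}^{\T}y_{\rm p}$ satisfies the integrated inequality.
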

\begin{pf}
From \eqref{def_dissipativity_delay}, with
$
w(x_t,a(t))=y_{\rm r}^{\T}(t)y_{\rm r}(t),
$
we obtain
\begin{equation}\label{pf_dissipativity_delay}
\dot{\bar V}(t)
\le
y_{\rm r}^{\T}(t)y_{\rm r}(t)
-
\gamma y_{\rm p}^{\T}(t)y_{\rm p}(t).
\end{equation}
Integrating \eqref{pf_dissipativity_delay} over the interval $[0,T]$ gives
\[
\bar V(T)-\bar V(0)
\le
\int_0^T y_{\rm r}^{\T}(t)y_{\rm r}(t)\,dt
-
\gamma
\int_0^T y_{\rm p}^{\T}(t)y_{\rm p}(t)\,dt .
\]
Therefore,
$$
\int_0^T y_{\rm p}^{\T}(t)y_{\rm p}(t)\,dt
\le
\frac{1}{\gamma}
\int_0^T y_{\rm r}^{\T}(t)y_{\rm r}(t)\,dt 
-\frac{1}{\gamma}\left(\bar V(T)-\bar V(0)\right).
$$
Since the Lyapunov--Krasovskii functional is positive semidefinite, $\bar V(T)\ge0$ for all $T\ge0$. Hence,
\[
\int_0^T y_{\rm p}^{\T}(t)y_{\rm p}(t)\,dt
\le
\frac{1}{\gamma}
\int_0^T y_{\rm r}^{\T}(t)y_{\rm r}(t)\,dt
+
\frac{1}{\gamma}\bar V(0).
\]
Letting $T\to\infty$ yields
\[
\|y_{\rm p}\|_{L_2}^2
\le
\frac{1}{\gamma}
\|y_{\rm r}\|_{L_2}^2
+
\frac{1}{\gamma}\bar V(0),
\]
which gives \eqref{OOStab_delay} with
$
\beta(x_0)=\frac{1}{\gamma}\bar V(0).
$
For zero initial conditions, $\bar V(0)=0$. Therefore, if
$
\|y_{\rm r}\|_{L_2}^2\le1,
$
then
$
\|y_{\rm p}\|_{L_2}^2\le\frac{1}{\gamma}.
$
Taking the supremum over all admissible $a\in L_{2e}$ gives
$
OOG\le\frac{1}{\gamma}.
$
\qed
\end{pf}

To derive tractable conditions guaranteeing the dissipativity result of Lemma~\ref{Lem_dissipativity_delay}, we construct a Lyapunov--Krasovskii functional satisfying \eqref{def_dissipativity_delay}.  
When calculating its derivative, in contrast to a direct substitution of the right-hand side of the state equation instead of $\dot x(t)$, we follow the descriptor approach and treat $\dot x(t)$ as an additional free variable, see \cite{c3,IJRNC15}.
This provides extra degrees of freedom and usually leads to less conservative LMI conditions.

We consider the following Lyapunov--Krasovskii functional:
\begin{equation}\label{LKF_double_delay}
\begin{aligned}
\bar V(t)&=V(t,x_t,\dot x_t)
=x^{\T}(t)Px(t)
+\int_{t-h_{\rm u}}^{t}\!x^{\T}(s)S_{\rm u}x(s)ds\\
&+h_{\rm u}\!\int_{-h_{\rm u}}^{0}\!\int_{t+\theta}^{t}\!\!\dot x^{\T}(s)R_{\rm u}\dot x(s)ds d\theta
+\!\!\int_{t-\tau_{\rm u}(t)}^{t}\!\!x^{\T}(s)Q_{\rm u}x(s)ds\\
&+\int_{t-h_{\rm y}}^{t}\!x^{\T}(s)S_{\rm y}x(s)ds
+h_{\rm y}\!\int_{-h_{\rm y}}^{0}\!\int_{t+\theta}^{t}\dot x^{\T}(s)R_{\rm y}\dot x(s)dsd\theta\\
&+\int_{t-\tau_{\rm y}(t)}^{t}\!x^{\T}(s)Q_{\rm y}x(s)ds,
\end{aligned}
\end{equation}
where
$
P>0,$\,
$S_{\rm u}>0$,\,
$S_{\rm y}>0$,\,
$R_{\rm u}>0$,\,
$R_{\rm y}>0$,\,
$Q_{\rm u}>0$,\,
$Q_{\rm y}>0.$

Taking the derivative of \eqref{LKF_double_delay} along the trajectories of \eqref{sys_nonl_cl} gives
\begin{equation}\label{Vdot_first_bound}
\begin{aligned}
\dot{\bar V}(t)
&\le
2x^{\T}(t)P\dot x(t)
+x^{\T}(t)(S_{\rm u}+S_{\rm y}+Q_{\rm u}+Q_{\rm y})x(t)\\
&
-x^{\T}(t-h_{\rm u})S_{\rm u}x(t-h_{\rm u})
-x^{\T}(t-h_{\rm y})S_{\rm y}x(t-h_{\rm y})\\
&
-(1-d_{\rm u})x^{\T}(t-\tau_{\rm u}(t))Q_{\rm u}x(t-\tau_{\rm u}(t))\\
&
-(1-d_{\rm y})x^{\T}(t-\tau_{\rm y}(t))Q_{\rm y}x(t-\tau_{\rm y}(t))\\
&
+h_{\rm u}^2\dot x^{\T}(t)R_{\rm u}\dot x(t)
+h_{\rm y}^2\dot x^{\T}(t)R_{\rm y}\dot x(t)\\
&
-h_{\rm u}\int_{t-h_{\rm u}}^{t}\dot x^{\T}(s)R_{\rm u}\dot x(s)ds
-h_{\rm y}\int_{t-h_{\rm y}}^{t}\dot x^{\T}(s)R_{\rm y}\dot x(s)ds .
\end{aligned}
\end{equation}
Next, we estimate the integral terms in \eqref{Vdot_first_bound}. 
A standard approach is to apply Jensen's inequality, see, e.g., \cite{chen}. 
However, to obtain less conservative bounds, we instead employ the reciprocally convex approach, see \cite{Park2011}.
For the actuation delay, assume that there exists a matrix $S_{12}^{\rm u}$ such that
$\begin{bmatrix}
R_{\rm u} & S_{12}^{\rm u}\\
* & R_{\rm u}
\end{bmatrix}\ge0.
$
Then,
\begin{multline}\label{reciprocal_u}
-h_{\rm u}\int_{t-h_{\rm u}}^{t}\dot x^{\T}(s)R_{\rm u}\dot x(s)ds
\le
-\begin{bmatrix}
x(t)-x(t-\tau_{\rm u}(t))\\
x(t-\tau_{\rm u}(t))-x(t-h_{\rm u})
\end{bmatrix}^{\!\T}\\
\times
\begin{bmatrix}
R_{\rm u} & S_{12}^{\rm u}\\
* & R_{\rm u}
\end{bmatrix}
\begin{bmatrix}
x(t)-x(t-\tau_{\rm u}(t))\\
x(t-\tau_{\rm u}(t))-x(t-h_{\rm u})
\end{bmatrix}.
\end{multline}
Similarly, for the measurement delay, if there exists a matrix $S_{12}^{\rm y}$ such that
$\begin{bmatrix}
R_{\rm y} & S_{12}^{\rm y}\\
* & R_{\rm y}
\end{bmatrix}\ge0$,
then the corresponding integral term is bounded by
\begin{multline}\label{reciprocal_y}
-h_{\rm y}\int_{t-h_{\rm y}}^{t}\dot x^{\T}(s)R_{\rm y}\dot x(s)ds
\le
-\begin{bmatrix}
x(t)-x(t-\tau_{\rm y}(t))\\
x(t-\tau_{\rm y}(t))-x(t-h_{\rm y})
\end{bmatrix}^{\T}\\
\times
\begin{bmatrix}
R_{\rm y} & S_{12}^{\rm y}\\
* & R_{\rm y}
\end{bmatrix}
\begin{bmatrix}
x(t)-x(t-\tau_{\rm y}(t))\\
x(t-\tau_{\rm y}(t))-x(t-h_{\rm y})
\end{bmatrix}.
\end{multline}

To avoid substituting the right-hand side of \eqref{sys_nonl_cl} directly into $\dot x(t)$, we use the descriptor identity
\begin{equation}\label{descriptor_identity}
\begin{aligned}
0
&=
2\left[x^{\T}(t)P_2^{\T}+\dot x^{\T}(t)P_3^{\T}\right]\\
&\times\!
\left[
Ax(t)+A_1x(t-\tau_{\rm u}(t))
+A_2x(t-\tau_{\rm y}(t))
+Ba(t)-\dot x(t)
\right]\!,
\end{aligned}
\end{equation}
where $P_2$ and $P_3$ are free matrices of appropriate dimensions.

Combining \eqref{Vdot_first_bound}--\eqref{descriptor_identity} with the supply-rate expression, we obtain
\begin{equation}\label{dissipation_estimate}
\dot{\bar V}(t)
-y_{\rm r}^{\T}(t)y_{\rm r}(t)
+\gamma y_{\rm p}^{\T}(t)y_{\rm p}(t)
\le
\eta^{\T}(t)\Psi\eta(t),
\end{equation}
where
$$
\begin{aligned}
\eta^{\T}(t)=
\big[&
x^{\T}(t),\,
\dot x^{\T}(t),\,
x^{\T}(t-h_{\rm u}),\,
x^{\T}(t-h_{\rm y}),\,
x^{\T}(t-\tau_{\rm u}(t)),\,\\
&x^{\T}(t-\tau_{\rm y}(t)),\,
a^{\T}(t)
\big].
\end{aligned}
$$
The matrix $\Psi=\{\Psi_{ij}\}_{i,j=1}^{7}$ is symmetric, with the nonzero upper-triangular blocks given by
\begin{equation}\label{Psi_double_blocks}
\begin{aligned}
\Psi_{11}
&=
A^{\T}P_2+P_2^{\T}A
+S_{\rm u}+S_{\rm y}+Q_{\rm u}+Q_{\rm y}
-R_{\rm u}-R_{\rm y}\\
&\quad
-C_{\rm r}^{\T}C_{\rm r}
+\gamma C_{\rm p}^{\T}C_{\rm p},\\
\Psi_{12}
&=
P-P_2^{\T}+A^{\T}P_3,\,\,
\Psi_{13}
=
S_{12}^{\rm u},\,\,
\Psi_{14}
=
S_{12}^{\rm y},\\
\Psi_{15}
&=
R_{\rm u}-S_{12}^{\rm u}
+P_2^{\T}A_1
+\gamma C_{\rm p}^{\T}C_1,\\
\Psi_{16}
&=
R_{\rm y}-S_{12}^{\rm y}
+P_2^{\T}A_2
-C_{\rm r}^{\T}C_2,\\
\Psi_{17}
&=
P_2^{\T}B
-C_{\rm r}^{\T}D_{\rm r}
+\gamma C_{\rm p}^{\T}D_{\rm p},\\
\Psi_{22}
&=
-P_3-P_3^{\T}
+h_{\rm u}^2R_{\rm u}
+h_{\rm y}^2R_{\rm y},\,\,
\Psi_{23}=0,\\
\Psi_{24}&=0,\,\Psi_{25}
=
P_3^{\T}A_1,\,\,
\Psi_{26}
=
P_3^{\T}A_2,\,\,
\Psi_{27}
=
P_3^{\T}B,\\
\Psi_{33}
&=
-S_{\rm u}-R_{\rm u},\,\,
\Psi_{34}
=
0,\\
\Psi_{35}
&=
R_{\rm u}-(S_{12}^{\rm u})^{\T},\,\,
\Psi_{36}
=
0,\,
\Psi_{37}=0,\\
\Psi_{44}
&=
-S_{\rm y}-R_{\rm y},\,\,
\Psi_{45}
=
0,\,
\Psi_{46}
=
R_{\rm y}-(S_{12}^{\rm y})^{\T},\,\,
\Psi_{47}
=
0,\\
\Psi_{55}
&=
-2R_{\rm u}+S_{12}^{\rm u}+(S_{12}^{\rm u})^{\T}
-(1-d_{\rm u})Q_{\rm u}
+\gamma C_1^{\T}C_1,\\
\Psi_{56}
&=
0,\,\,
\Psi_{57}
=
\gamma C_1^{\T}D_{\rm p},\\
\Psi_{66}
&=
-2R_{\rm y}+S_{12}^{\rm y}+(S_{12}^{\rm y})^{\T}
-(1-d_{\rm y})Q_{\rm y}
-C_2^{\T}C_2,\\
\Psi_{67}
&=
-C_2^{\T}D_{\rm r},\,\,
\Psi_{77}
=
-D_{\rm r}^{\T}D_{\rm r}
+\gamma D_{\rm p}^{\T}D_{\rm p}.
\end{aligned}
\end{equation}
All omitted blocks are determined by symmetry.
Therefore, if $\Psi\le0$, then from \eqref{dissipation_estimate} we obtain 
the dissipativity inequality \eqref{def_dissipativity_delay} required in Lemma~\ref{Lem_dissipativity_delay}. 

This leads to the following result.
\begin{Theorem}\label{Thm_double_delay_OOG}
Consider the time-delay closed-loop system \eqref{sys_nonl_cl}. 
Assume that there exist matrices
$
P>0$,\,
$S_{\rm u}>0$,\,
$S_{\rm y}>0$,\,
$R_{\rm u}>0$,\,
$R_{\rm y}>0$,\,
$Q_{\rm u}>0$,\,
$Q_{\rm y}>0$,\,
$P_2$,\, $P_3$,\, $S_{12}^{\rm u}$,\, and $S_{12}^{\rm y}$ of appropriate dimensions, and a scalar $\gamma>0$ such that the following LMIs are feasible:
\begin{equation}\label{LMI_double_delay_main}
\Psi\le0,\quad
\begin{bmatrix}
R_{\rm u} & S_{12}^{\rm u}\\
* & R_{\rm u}
\end{bmatrix}\ge0,\quad
\begin{bmatrix}
R_{\rm y} & S_{12}^{\rm y}\\
* & R_{\rm y}
\end{bmatrix}\ge0,
\end{equation}
where $\Psi$ is defined by \eqref{Psi_double_blocks}.
Then the system is finite-gain $L_2$ output-to-output stable and
$
OOG\le \frac{1}{\gamma}.
$
\end{Theorem}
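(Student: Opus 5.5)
The plan is to verify that the functional $\bar V$ in \eqref{LKF_double_delay} meets the requirements of the output strict dissipativity definition with supply rate $w=y_{\rm r}^{\T}y_{\rm r}$, and then invoke Lemma~\ref{Lem_dissipativity_delay}. Positive semidefiniteness of $V$ follows from $P>0$, $S_{\rm u},S_{\rm y},Q_{\rm u},Q_{\rm y}>0$ and $R_{\rm u},R_{\rm y}>0$. Every term in \eqref{LKF_double_delay} is a nonnegative quadratic integral. Moreover, with the zero extension $x(t)=0$ on $[-h,0]$ we get $\bar V(0)=0$, which gives $\beta(x_0)=0$ in \eqref{OOStab_delay}. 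For $x_t\in W$, $\bar V$ is absolutely continuous in $t$, so the differential inequality can be read almost everywhere. This is enough for the integration step in the proof of Lemma~\ref{Lem_dissipativity_delay}.

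Next I justify the derivative bound \eqref{Vdot_first_bound} term by term. The $P$ term gives $2x^{\T}P\dot x$. The $S$ terms give $x^{\T}Sx-x^{\T}(t-h)Sx(t-h)$. For the $Q$ terms, Leibniz' rule gives $x^{\T}Q x-(1-\dot\tau)x^{\T}(t-\tau)Qx(t-\tau)$, and this is bounded using $\dot\tau\le d$ and $Q>0$. The double-integral terms give $h^2\dot x^{\T}R\dot x-h\int_{t-h}^t\dot x^{\T}R\dot x\,ds$.

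The integral terms are then handled by the reciprocally convex bounds \eqref{reciprocal_u}--\eqref{reciprocal_y}. I would check these explicitly, since they are the only genuinely nontrivial inequalities:
\begin{enumerate}
\item Split $\int_{t-h}^t=\int_{t-\tau}^t+\int_{t-h}^{t-\tau}$.
\item Apply Jensen's inequality to each piece, obtaining $-\frac{h}{\tau}\xi_1^{\T}R\xi_1-\frac{h}{h-\tau}\xi_2^{\T}R\xi_2$, where $\xi_1=x(t)-x(t-\tau)$ and $\xi_2=x(t-\tau)-x(t-h)$.
\item Apply Park's lemma with $\alpha=\tau/h$, which bounds this by the quadratic form with matrix $\begin{bmatrix}R&S_{12}\\ *&R\end{bmatrix}$. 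This uses the second and third LMIs in \eqref{LMI_double_delay_main}.
\item Treat the endpoint cases $\tau=0$ and $\tau=h$ separately by continuity, or by noting that the corresponding $\xi_i$ vanishes.
\end{enumerate}

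I then add the descriptor identity \eqref{descriptor_identity}, which is zero along solutions of \eqref{sys_nonl_cl}. I also substitute $y_{\rm p}$ and $y_{\rm r}$ from \eqref{sys_nonl_cl} into $-y_{\rm r}^{\T}y_{\rm r}+\gamma y_{\rm p}^{\T}y_{\rm p}$. Collecting everything as a quadratic form in $\eta(t)$ yields \eqref{dissipation_estimate} with $\Psi$ as in \eqref{Psi_double_blocks}.

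The main obstacle is purely bookkeeping: checking that each block of $\Psi$ is correct. Two groups need care. The first is the cross terms from expanding the reciprocal-convex quadratic form, namely the $R-S_{12}$ entries in $\Psi_{15},\Psi_{35},\Psi_{55}$. The second is the output cross terms, e.g. $\gamma C_{\rm p}^{\T}C_1$ and $-C_{\rm r}^{\T}C_2$.

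With $\Psi\le0$, \eqref{dissipation_estimate} gives $\dot{\bar V}\le y_{\rm r}^{\T}y_{\rm r}-\gamma y_{\rm p}^{\T}y_{\rm p}$. This is exactly \eqref{def_dissipativity_delay}, so the system is output strictly dissipative. Lemma~\ref{Lem_dissipativity_delay} then yields finite-gain $L_2$ output-to-output stability and, since $\bar V(0)=0$, the bound $OOG\le 1/\gamma$.
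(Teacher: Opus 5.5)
Your proposal is correct and follows the paper's own argument: the functional \eqref{LKF_double_delay}, the derivative bound \eqref{Vdot_first_bound}, the reciprocally convex estimates, the descriptor identity, assembly into $\eta^{\T}\Psi\eta$, and then Lemma~\ref{Lem_dissipativity_delay} with $\bar V(0)=0$. Your extra details fill in steps the paper asserts without proof: Jensen plus Park's lemma (including the endpoint cases), and reading the dissipation inequality almost everywhere for an absolutely continuous $\bar V$.
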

\begin{Remark}
The best upper bound provided by Theorem~\ref{Thm_double_delay_OOG} is obtained by maximizing $\gamma$ subject to the LMIs \eqref{LMI_double_delay_main}.
\end{Remark}

\section{Special cases}

In this section, we consider two special cases of the general delay-dependent framework developed above, where the delays are constant. 
For this setting, finite-dimensional approximations based on Pad\'e expansions can be employed, allowing the delayed system to be represented approximately by a finite-dimensional delay-free model. 
We then discuss a class of systems admitting an exact finite-dimensional reduction following the approach of \cite{Churilov13}.

\subsection{Pad\'e approximation}
We now consider the special case where the delays are constant:
$
\tau_{\rm u}(t)=h_{\rm u},
\tau_{\rm y}(t)=h_{\rm y}.
$
In this case, the closed-loop system \eqref{sys_nonl_cl} becomes
\begin{equation}\label{sys_const_delay_2}
\begin{aligned}
\dot x(t)
&=
Ax(t)
+A_1x(t-h_{\rm u})
+A_2x(t-h_{\rm y})
+Ba(t),\\
y_{\rm p}(t)
&=
C_{\rm p}x(t)
+C_1x(t-h_{\rm u})
+D_{\rm p}a(t),\\
y_{\rm r}(t)
&=
C_{\rm r}x(t)
+C_2x(t-h_{\rm y})
+D_{\rm r}a(t).
\end{aligned}
\end{equation}

Although the system \eqref{sys_const_delay_2} remains infinite-dimensional due to the delayed state terms, constant delays admit finite-dimensional approximations in the frequency domain. 
A common approach is to approximate each delay operator, $\e^{-sh_{\rm u}}$ and $\e^{-sh_{\rm y}}$, by rational transfer functions using Pad\'e approximation. 
For simplicity, consider the first-order Pad\'e approximation
\begin{equation}\label{pade_first_general}
\e^{-sh}
\approx
\frac{1-\frac{h}{2}s}{1+\frac{h}{2}s}.
\end{equation}
Higher-order approximations can also be employed to improve approximation accuracy, at the expense of increasing the system dimension.

To approximate the delayed states, introduce auxiliary variables
\[
z_{\rm u}(t)\approx x(t-h_{\rm u}),
\qquad
z_{\rm y}(t)\approx x(t-h_{\rm y}),
\]
satisfying
\begin{equation}\label{pade_aux_general}
\left(1+\frac{h_{\rm u}}{2}\frac{d}{dt}\right)z_{\rm u}(t)
=
\left(1-\frac{h_{\rm u}}{2}\frac{d}{dt}\right)x(t),
\end{equation}
and
\begin{equation}\label{pade_aux_general2}
\left(1+\frac{h_{\rm y}}{2}\frac{d}{dt}\right)z_{\rm y}(t)
=
\left(1-\frac{h_{\rm y}}{2}\frac{d}{dt}\right)x(t).
\end{equation}
Equivalently,
\begin{equation}\label{pade_state_general}
\dot z_{\rm u}(t)
=
-\frac{2}{h_{\rm u}}z_{\rm u}(t)
+\frac{2}{h_{\rm u}}x(t)
-\dot x(t),
\end{equation}
and
\begin{equation}\label{pade_state_general2}
\dot z_{\rm y}(t)
=
-\frac{2}{h_{\rm y}}z_{\rm y}(t)
+\frac{2}{h_{\rm y}}x(t)
-\dot x(t).
\end{equation}
Define the augmented state
$
\chi(t)=
\begin{bmatrix}
x(t)\\
z_{\rm u}(t)\\
z_{\rm y}(t)
\end{bmatrix}.
$
Then substituting \eqref{pade_state_general}--\eqref{pade_state_general2} into \eqref{sys_const_delay_2} yields the approximate finite-dimensional system
\begin{equation}\label{sys_pade_compact_general}
\begin{aligned}
\dot\chi(t)
&=
\mathcal A\chi(t)+\mathcal B a(t),\\
y_{\rm p}(t)
&=
\mathcal C_{\rm p}\chi(t)+D_{\rm p}a(t),\\
y_{\rm r}(t)
&=
\mathcal C_{\rm r}\chi(t)+D_{\rm r}a(t),
\end{aligned}
\end{equation}
where
\[
\mathcal A=
\begin{bmatrix}
A & A_1 & A_2\\
\frac{2}{h_{\rm u}}I-A & -\frac{2}{h_{\rm u}}I-A_1 & -A_2\\
\frac{2}{h_{\rm y}}I-A & -A_1 & -\frac{2}{h_{\rm y}}I-A_2
\end{bmatrix},
\]
\[
\mathcal B=
\begin{bmatrix}
B\\
-B\\
-B
\end{bmatrix},\quad
\mathcal C_{\rm p}=
\begin{bmatrix}
C_{\rm p} & C_1 & 0
\end{bmatrix},
\quad
\mathcal C_{\rm r}=
\begin{bmatrix}
C_{\rm r} & 0 & C_2
\end{bmatrix}.
\]

Therefore, after applying the Pad\'e approximation, the original infinite-dimensional delayed system is transformed into the finite-dimensional system \eqref{sys_pade_compact_general}. 
As a result, the output-to-output gain can be analyzed using the OOG framework for delay-free linear systems, yielding tractable LMI conditions for estimating the worst-case output-to-output amplification. 
\begin{Theorem}[\cite{Teixeira2021}]\label{thm_Andre}
Assume there exist $P\ge0$ and $\gamma>0$ such that the LMI
\begin{equation}\label{fd_oog_lmi}
\begin{bmatrix}
\mathcal A^{\T}P+P\mathcal A & P\mathcal B\\
* & 0
\end{bmatrix}
-
\begin{bmatrix}
\mathcal C_{\rm r}^{\T}\\
D_{\rm r}^{\T}
\end{bmatrix}
\begin{bmatrix}
\mathcal C_{\rm r} & D_{\rm r}
\end{bmatrix}
+
\gamma
\begin{bmatrix}
\mathcal C_{\rm p}^{\T}\\
D_{\rm p}^{\T}
\end{bmatrix}
\begin{bmatrix}
\mathcal C_{\rm p} & D_{\rm p}
\end{bmatrix}
\le0
\end{equation}
holds. Then the output-to-output gain of the approximate system \eqref{sys_pade_compact_general} satisfies the bound
$OOG\le \frac{1}{\gamma}.$
\end{Theorem}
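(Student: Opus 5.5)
The plan is to reduce Theorem~\ref{thm_Andre} to the finite-dimensional special case of Lemma~\ref{Lem_dissipativity_delay}. We use the quadratic storage function $V(\chi)=\chi^{\T}P\chi$ with the given $P\ge0$. Since the system \eqref{sys_pade_compact_general} is delay-free, its state is just $\chi(t)$, and $V$ is a legitimate positive semidefinite Lyapunov--Krasovskii functional with no integral terms. So it suffices to show that the LMI \eqref{fd_oog_lmi} is exactly the output strict dissipativity inequality \eqref{def_dissipativity_delay} with supply rate $y_{\rm r}^{\T}y_{\rm r}$.

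First, differentiate $\bar V(t)=\chi^{\T}(t)P\chi(t)$ along the solutions of \eqref{sys_pade_compact_general} and set $\xi(t)=[\chi^{\T}(t),\,a^{\T}(t)]^{\T}$. This gives
\[
\dot{\bar V}(t)=\chi^{\T}(\mathcal A^{\T}P+P\mathcal A)\chi+2\chi^{\T}P\mathcal B a,
\]
which equals $\xi^{\T}(t)$ times the first block matrix in \eqref{fd_oog_lmi} times $\xi(t)$. Next, write $y_{\rm r}=[\mathcal C_{\rm r}\ D_{\rm r}]\xi$ and $y_{\rm p}=[\mathcal C_{\rm p}\ D_{\rm p}]\xi$. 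Then $y_{\rm r}^{\T}y_{\rm r}$ and $y_{\rm p}^{\T}y_{\rm p}$ are the quadratic forms of the second and third matrices in \eqref{fd_oog_lmi}. Pre- and post-multiplying \eqref{fd_oog_lmi} by $\xi^{\T}(t)$ and $\xi(t)$ therefore yields
\[
\dot{\bar V}(t)-y_{\rm r}^{\T}(t)y_{\rm r}(t)+\gamma y_{\rm p}^{\T}(t)y_{\rm p}(t)\le0,
\]
which is precisely \eqref{def_dissipativity_delay} with $w=y_{\rm r}^{\T}y_{\rm r}$.

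Finally, invoke Lemma~\ref{Lem_dissipativity_delay}, or repeat its integration argument verbatim. Integrate over $[0,T]$, drop $\bar V(T)\ge0$, and use $\chi(0)=0$ so that $\bar V(0)=0$. This gives $\int_0^T|y_{\rm p}|^2\le\frac1\gamma\int_0^T|y_{\rm r}|^2$ for every $T$ and every $a\in L_{2e}$. Letting $T\to\infty$ and imposing $\|y_{\rm r}\|_{L_2}^2\le1$ yields $\|y_{\rm p}\|_{L_2}^2\le 1/\gamma$, and taking the supremum in \eqref{OOG} gives $OOG\le1/\gamma$.

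The main point needing care is that zero initial conditions for the delayed system carry over to the approximate model. We should set $z_{\rm u}(0)=z_{\rm y}(0)=0$ along with $x(0)=0$, so that $\chi(0)=0$. We also need $\bar V(T)\ge0$, which only requires $P\ge0$; no positive definiteness is needed, since no stability claim on the state is made. Apart from this, the argument is a direct quadratic-form identity, so there is no real obstacle.
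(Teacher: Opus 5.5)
Your proof is correct and takes essentially the same approach the paper relies on. The paper imports this result from \cite{Teixeira2021} without reproducing a proof, but the underlying argument is the dissipativity argument of Lemma~\ref{Lem_dissipativity_delay} with the quadratic storage function $\chi^{\T}P\chi$: read the LMI as a quadratic form in $[\chi^{\T},a^{\T}]^{\T}$ to get $\dot{\bar V}-y_{\rm r}^{\T}y_{\rm r}+\gamma y_{\rm p}^{\T}y_{\rm p}\le0$, then integrate using $P\ge0$ and $\chi(0)=0$ (including zero initial Pad\'e states), exactly as you do.
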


%
%

\begin{Remark}
Although Pad\'e approximations provide a convenient finite-dimensional representation of constant-delay systems, the resulting model remains only approximate. Moreover, the approximation accuracy may deteriorate for large delays, low approximation orders, or in the presence of fast transient dynamics. In addition, Pad\'e-based models are primarily applicable to systems with constant delays and become significantly more involved in the case of time-varying delays. A detailed investigation of how the Pad\'e approximation affects the output-to-output gain of the original system may be a topic for future research.

For these reasons, the main results of this paper were derived directly for the original delay system \eqref{sys_nonl_cl}. The proposed Lyapunov--Krasovskii framework applies to systems with distinct measurement and actuation delays, allows the delays to be time varying, and provides delay-dependent conditions without introducing approximation errors associated with finite-dimensional reductions.

Nevertheless, the Pad\'e approximation possesses an important advantage that complements the proposed delay-dependent analysis. While the LMI conditions developed in this paper provide upper bounds on the output-to-output gain of the original time-delay system, they do not directly yield perturbation signals that approach the worst-case performance. In contrast, after replacing the delay operator by a finite-dimensional Pad\'e realization, the resulting model falls within the class of delay-free systems considered in \cite{Teixeira2021}. Consequently, the generalized eigenvalue formulation from \cite{Teixeira2013} can be employed to construct suboptimal perturbation signals. These signals can then be applied to the original delayed system to assess the practical tightness of the obtained OOG estimates.

Therefore, the Pad\'e approach serves a dual purpose. On the one hand, it provides a finite-dimensional approximation that can be analyzed using standard OOG tools. On the other hand, it offers a systematic mechanism for constructing perturbation signals tailored to the presence of delays. This feature is particularly useful in numerical studies, where direct optimization over the infinite-dimensional delayed dynamics is generally difficult.
\end{Remark}

\subsection{Finite-dimension reducible systems}
We now consider a special case in which the constant-delay system admits an exact finite-dimensional reduction. 
This class of systems is known as finite-dimension reducible (FD-reducible). 
In contrast to Pad\'e approximation, the reduction is exact: after an initial transient interval of length equal to the delay, the state trajectory of the delay system satisfies a finite-dimensional delay-free differential equation.

To unify the subsequent analysis, we consider the closed-loop system \eqref{sys_nonl_cl} with constant delays and rewrite it in the generic form
\begin{equation}\label{sys_fd_delay}
\begin{aligned}
\dot x(t)
&=
Ax(t)+A_{\rm d}x(t-h)+Ba(t),\\
y_{\rm p}(t)
&=
C_{\rm p}x(t)+C_{\rm d,p}x(t-h)+D_{\rm p}a(t),\\
y_{\rm r}(t)
&=
C_{\rm r}x(t)+C_{\rm d,r}x(t-h)+D_{\rm r}a(t),
\end{aligned}
\end{equation}
where the matrices $A_{\rm d}$, $C_{\rm d,p}$ and $C_{\rm d,r}$ depend on the considered delay configuration.
In particular, three cases will be considered.
\begin{itemize}
\item {\it Equal measurement and actuation delays}, i.e.,
$
\tau_{\rm u}=\tau_{\rm y}=h.
$
In this case,
$
A_{\rm d}=A_1+A_2,
C_{\rm d,p}=C_1,
C_{\rm d,r}=C_2.
$
\item {\it Measurement delay only}, i.e.,
$
\tau_{\rm u}=0,
\tau_{\rm y}=h.
$
Since the actuation delay becomes delay-free, its contribution is absorbed into the nominal dynamics. Therefore,
$
A\leftarrow A+A_1,
$
and
$
A_{\rm d}=A_2,
C_{\rm d,p}=0,
C_{\rm d,r}=C_2.
$
\item {\it Actuation delay only}, i.e.,
$
\tau_{\rm u}=h,
\tau_{\rm y}=0.
$
Similarly, the measurement delay contribution is absorbed into the nominal dynamics:
$
A\leftarrow A+A_2,
$
and
$
A_{\rm d}=A_1,
C_{\rm d,p}=C_1,
C_{\rm d,r}=0.
$
\end{itemize}

The following result extends the FD-reducibility conditions of \cite{Churilov13} and \cite{Yamalova15} to the forced system \eqref{sys_fd_delay}.
\begin{Theorem}[FD-reducibility with external input]\label{Thm_FD_forced}$\,$\\
Consider the forced delay system
\begin{equation}\label{fd_forced_system}
\dot x(t)=Ax(t)+A_{\rm d}x(t-h)+Ba(t),
\end{equation}
where $x(t)\in\mathbb R^{n_x}$, $a(t)\in\mathbb R^{n_a}$, and $h>0$ is constant. 
Assume that
\begin{equation}\label{fd_cond_state_input}
A_{\rm d}A^kA_{\rm d}=0,
\quad
A_{\rm d}A^kB=0,
\quad
k=0,1,\ldots,n_x-1.
\end{equation}
Equivalently,
\begin{equation}\label{fd_cond_exp}
A_{\rm d}e^{At}A_{\rm d}=0,
\quad
A_{\rm d}e^{At}B=0,
\quad
\forall t\in\mathbb R.
\end{equation}
Then every solution of \eqref{fd_forced_system} satisfies, for $t\ge h$,
\begin{equation}\label{fd_reduced_system}
\dot x(t)=Dx(t)+Ba(t),
\end{equation}
where
$D=A+A_{\rm d}e^{-Ah}$.
\end{Theorem}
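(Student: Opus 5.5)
The plan is to exploit the variation-of-constants formula over one delay interval. Premultiplying it by $A_{\rm d}e^{-Ah}$ should make every integral term vanish, thanks to \eqref{fd_cond_exp}. That leaves an algebraic identity expressing the delayed term $A_{\rm d}x(t-h)$ through the current state $x(t)$. Substituting this identity back into \eqref{fd_forced_system} then gives \eqref{fd_reduced_system}.

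First I would settle the equivalence of \eqref{fd_cond_state_input} and \eqref{fd_cond_exp}.
\begin{itemize}
\item By the Cayley--Hamilton theorem, for every $t\in\mathbb R$ (including negative $t$) there are scalar functions $c_k(t)$ with $e^{At}=\sum_{k=0}^{n_x-1}c_k(t)A^k$.
\item Hence \eqref{fd_cond_state_input} implies $A_{\rm d}e^{At}A_{\rm d}=\sum_k c_k(t)A_{\rm d}A^kA_{\rm d}=0$, and likewise $A_{\rm d}e^{At}B=0$.
\item Conversely, if \eqref{fd_cond_exp} holds identically in $t$, differentiating $k$ times and setting $t=0$ gives $A_{\rm d}A^kA_{\rm d}=0$ and $A_{\rm d}A^kB=0$.
\end{itemize}

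Next, fix $t\ge h$, so that $t-h\ge 0$ and the solution is defined forward from $t-h$. The solution is absolutely continuous and $a\in L_{2e}$, so the variation-of-constants formula on $[t-h,t]$ is legitimate in the Carathéodory sense:
\[
x(t)=e^{Ah}x(t-h)+\int_{t-h}^{t}e^{A(t-s)}\big[A_{\rm d}x(s-h)+Ba(s)\big]ds .
\]
Here $s-h\in[t-2h,t-h]\subset[-h,\infty)$, so the arguments $x(s-h)$ are given either by the initial function or by the solution itself. Multiplying on the left by $A_{\rm d}e^{-Ah}$ gives
\[
A_{\rm d}e^{-Ah}x(t)=A_{\rm d}x(t-h)+\int_{t-h}^{t}\Big(A_{\rm d}e^{A(t-s-h)}A_{\rm d}\,x(s-h)+A_{\rm d}e^{A(t-s-h)}B\,a(s)\Big)ds .
\]
By \eqref{fd_cond_exp}, evaluated at the real argument $t-s-h$, both integrands vanish identically. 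Therefore $A_{\rm d}x(t-h)=A_{\rm d}e^{-Ah}x(t)$ for all $t\ge h$. Substituting this into \eqref{fd_forced_system} yields
\[
\dot x(t)=\big(A+A_{\rm d}e^{-Ah}\big)x(t)+Ba(t)=Dx(t)+Ba(t),
\]
which holds almost everywhere, and everywhere if $a$ is continuous.

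The only delicate point is the need for \eqref{fd_cond_exp} at possibly negative arguments $t-s-h\in[-h,0]$. This is exactly why the Cayley--Hamilton step must be carried out for all $t\in\mathbb R$, not only $t\ge0$. A secondary point is regularity: with $a\in L_{2e}$, the reduced equation is understood almost everywhere. The restriction $t\ge h$ is essential, because for $t<h$ the formula would involve the initial function on an interval where it is not generated by the dynamics.
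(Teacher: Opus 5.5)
Your proposal is correct and follows essentially the same route as the paper. You use Cayley--Hamilton for the equivalence: you write $e^{At}$ as a polynomial of degree $n_x-1$ in $A$, while the paper extends $A_{\rm d}A^kA_{\rm d}=0$ to all $k$ inside the power series, which amounts to the same thing. You then apply variation of constants on $[t-h,t]$ with the integrals killed by $A_{\rm d}e^{A(t-h-s)}$, giving $A_{\rm d}x(t-h)=A_{\rm d}e^{-Ah}x(t)$, exactly as in the paper. Your remarks on needing \eqref{fd_cond_exp} at negative arguments and on the reduced equation holding almost everywhere are sound refinements that the paper leaves implicit.
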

\begin{pf}
First, we show the equivalence between \eqref{fd_cond_state_input} and \eqref{fd_cond_exp}.
Using the expansion
$
e^{At}=\sum_{k=0}^{\infty}\frac{A^k t^k}{k!},
$
we obtain
$
A_{\rm d}e^{At}A_{\rm d}
=
\sum_{k=0}^{\infty}
\frac{A_{\rm d}A^kA_{\rm d}}{k!}t^k.
$
Thus, by the Cayley--Hamilton theorem,
\[
A_{\rm d}A^kA_{\rm d}=0,\qquad k=0,\ldots,n_x-1,
\]
implies
\[
A_{\rm d}e^{At}A_{\rm d}=0,\qquad \forall t\in\mathbb R.
\]
Conversely, differentiating the above expression at $t=0$ gives
\[
A_{\rm d}A^kA_{\rm d}=0,
\qquad k=0,1,\ldots .
\]
The same argument applies to the input matrix $B$. 

It remains to show that, under \eqref{fd_cond_exp}, the solution satisfies the delay-free equation \eqref{fd_reduced_system} for $t\ge h$.
For $t\ge h$, the variation-of-constants formula applied on the interval $[t-h,t]$ gives
\[
\begin{aligned}
x(t-h)
&=
e^{-Ah}x(t)
-
\int_{t-h}^{t}
e^{A(t-h-s)}A_{\rm d}x(s-h)\,ds\\
&\quad
-
\int_{t-h}^{t}
e^{A(t-h-s)}Ba(s)\,ds .
\end{aligned}
\]
Multiplying both sides by $A_{\rm d}$ and using \eqref{fd_cond_exp}, we obtain
\[
A_{\rm d}x(t-h)
=
A_{\rm d}e^{-Ah}x(t).
\]
Substituting this identity into \eqref{fd_forced_system} yields \eqref{fd_reduced_system}. \qed
\end{pf}

\begin{Remark}[Block representation]\label{Rem_FD_block}
The conditions in \eqref{fd_cond_state_input} can be related to the block representation used in \cite{Churilov13}. 
In particular, the homogeneous FD-reducibility condition is equivalent to the existence of an invertible matrix $S$ such that
\begin{equation}\label{fd_block1}
S^{-1}AS=
\begin{bmatrix}
U&0\\
W&V
\end{bmatrix},
\qquad
S^{-1}A_{\rm d}S=
\begin{bmatrix}
0&0\\
\bar W&0
\end{bmatrix}.
\end{equation}
For the forced system, a simple structural condition ensuring
$
A_{\rm d}e^{At}B=0$ for all $t$,
is
\begin{equation}\label{fd_block2}
S^{-1}B=
\begin{bmatrix}
0\\
B_2
\end{bmatrix}.
\end{equation}
for some $B_2$.
Indeed, in these coordinates,
$
S^{-1}e^{At}B
=
\begin{bmatrix}
0\\
e^{Vt}B_2
\end{bmatrix},
$
and therefore
$
S^{-1}A_{\rm d}e^{At}B
=
\begin{bmatrix}
0&0\\
\bar W&0
\end{bmatrix}
\begin{bmatrix}
0\\
e^{Vt}B_2
\end{bmatrix}
=0.
$
More generally, if
$
S^{-1}B=
\begin{bmatrix}
B_1\\
B_2
\end{bmatrix},
$
then the exact equivalent block condition is
$
\bar W e^{Ut}B_1=0,
$
or equivalently
$
\bar W U^kB_1=0, k=0,1,\ldots,n_U-1.
$

Thus, the conditions \eqref{fd_block1}-\eqref{fd_block2} is a convenient sufficient structural condition, while the exponential conditions in \eqref{fd_cond_exp} gives the exact characterization.
\end{Remark}

\begin{corollary}[Delay-free output representation]\label{Cor_FD_delay_free_outputs}
Suppose \\ that the assumptions of Theorem~\ref{Thm_FD_forced} hold and 
\begin{equation}\label{fd_condition_output}
C_{\rm d,p}D^kB=0,
\quad
C_{\rm d,r}D^kB=0,
\quad
k=0,1,\ldots,n_x-1.
\end{equation}
Equivalently, $
C_{\rm d,p}e^{Dt}B=0,
\,
C_{\rm d,r}e^{Dt}B=0$,
for all $t\in\mathbb R$.
Then
the outputs admit the delay-free representation
\begin{equation}\label{fd_delay_free_outputs}
\begin{aligned}
y_{\rm p}(t)
&=
\bar C_{\rm p}x(t)+D_{\rm p}a(t),\\
y_{\rm r}(t)
&=
\bar C_{\rm r}x(t)+D_{\rm r}a(t),
\end{aligned}
\qquad t\ge 2h,
\end{equation}
where
\begin{equation}\label{fd_bar_C}
\bar C_{\rm p}
=
C_{\rm p}+C_{\rm d,p}e^{-Dh},
\qquad
\bar C_{\rm r}
=
C_{\rm r}+C_{\rm d,r}e^{-Dh}.
\end{equation}
\end{corollary}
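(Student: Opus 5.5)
Following the proof of Theorem~\ref{Thm_FD_forced}, the plan is to express the delayed state $x(t-h)$ through $x(t)$ using the reduced dynamics \eqref{fd_reduced_system}. Since the reduction holds only on $[h,\infty)$, we take $t\ge 2h$ so that the whole interval $[t-h,t]$ lies where $\dot x=Dx+Ba$ is valid. First, the equivalence between the polynomial and exponential forms of \eqref{fd_condition_output} is handled exactly as in Theorem~\ref{Thm_FD_forced}. Expand $C_{\rm d,p}e^{Dt}B=\sum_k C_{\rm d,p}D^kB\,t^k/k!$ and use Cayley--Hamilton to reduce all powers $D^k$, $k\ge n_x$, to combinations of $D^0,\ldots,D^{n_x-1}$. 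For the converse, differentiate repeatedly at $t=0$. The same argument applies to $C_{\rm d,r}$.

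Next, fix $t\ge 2h$. On $[t-h,t]\subset[h,\infty)$ the state solves the finite-dimensional ODE \eqref{fd_reduced_system}, so variation of constants gives
\[
x(t)=e^{Dh}x(t-h)+\int_{t-h}^{t}e^{D(t-s)}Ba(s)\,ds.
\]
Equivalently,
\[
x(t-h)=e^{-Dh}x(t)-\int_{t-h}^{t}e^{D(t-h-s)}Ba(s)\,ds.
\]
Multiplying on the left by $C_{\rm d,p}$ makes the integrand $C_{\rm d,p}e^{D(t-h-s)}B$, which vanishes identically by the exponential form of \eqref{fd_condition_output}; here we need the identity for all real arguments, since $t-h-s\in[-h,0]$. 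This leaves $C_{\rm d,p}x(t-h)=C_{\rm d,p}e^{-Dh}x(t)$. The same computation gives $C_{\rm d,r}x(t-h)=C_{\rm d,r}e^{-Dh}x(t)$. Substituting these into the output equations of \eqref{sys_fd_delay} produces \eqref{fd_delay_free_outputs} with $\bar C_{\rm p},\bar C_{\rm r}$ as in \eqref{fd_bar_C}.

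The only delicate point is the time-domain bookkeeping. Theorem~\ref{Thm_FD_forced} gives the reduced equation only for $t\ge h$, so backward propagation over one delay interval forces the threshold $2h$ rather than $h$. The solution must also be absolutely continuous, so that variation of constants applies for $a\in L_{2e}$; this is standard. The exponential identity is also needed for negative times, which the power-series argument covers, since it gives the identity for all $t\in\mathbb R$.
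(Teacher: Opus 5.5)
Your proposal is correct and follows the paper's proof essentially step for step. Both apply variation of constants for the reduced system on $[t-h,t]\subset[h,\infty)$ when $t\ge 2h$, use $C_{\rm d,p}e^{D(\cdot)}B=0$ and $C_{\rm d,r}e^{D(\cdot)}B=0$ to kill the integral term, and substitute the result into the output equations. Your extra details are also correct and are left implicit in the paper: the Cayley--Hamilton argument for the polynomial/exponential equivalence, and the observation that the exponential identity is needed for arguments in $[-h,0]$.
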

\begin{pf}
For $t\ge 2h$, both $t$ and $t-h$ belong to the interval where the state satisfies the reduced finite-dimensional equation
\[
\dot x(t)=Dx(t)+Ba(t).
\]
Using the variation-of-constants formula on $[t-h,t]$, we obtain
$
x(t-h)
=
e^{-Dh}x(t)
-
\int_{t-h}^{t} e^{D(t-h-s)}Ba(s)\,ds.
$
Multiplying this identity by $C_{\rm d,p}$ gives
\[
C_{\rm d,p}x(t-h)
=
C_{\rm d,p}e^{-Dh}x(t)
-
\int_{t-h}^{t}C_{\rm d,p}e^{D(t-h-s)}Ba(s)\,ds.
\]
By \eqref{fd_condition_output}, the integral term vanishes. Hence,
$
C_{\rm d,p}x(t-h)
=
C_{\rm d,p}e^{-Dh}x(t).
$
The same argument with $C_{\rm d,r}$ gives
$
C_{\rm d,r}x(t-h)
=
C_{\rm d,r}e^{-Dh}x(t).
$
Substituting these identities into the output equations yields \eqref{fd_delay_free_outputs}. \qed
\end{pf}

Under the additional output conditions of Corollary~\ref{Cor_FD_delay_free_outputs}, the original delay system admits the finite-dimensional delay-free representation
\begin{equation}\label{fd_final_system}
\begin{aligned}
\dot x(t)&=Dx(t)+Ba(t),\\
y_{\rm p}(t)&=\bar C_{\rm p}x(t)+D_{\rm p}a(t),\\
y_{\rm r}(t)&=\bar C_{\rm r}x(t)+D_{\rm r}a(t),
\end{aligned}
\qquad t\ge 2h.
\end{equation}

Therefore, the systems \eqref{sys_fd_delay} and \eqref{fd_final_system} are equivalent only for $t\ge 2h$. 
However, for zero initial conditions, the outputs are invariant on $h$, which is shown in the following corollary.

\begin{corollary}[Delay invariance of the OOG]
\label{Cor_FD_delay_invariance}
Assume that the nominal delayed system is FD-reducible, i.e., the assumptions of Theorem~\ref{Thm_FD_forced} and Corollary~\ref{Cor_FD_delay_free_outputs} hold.
Then, for zero initial conditions, the input-output maps
\[
a\mapsto y_{\rm p},
\qquad
a\mapsto y_{\rm r}
\]
are independent of the delay value $h$. Consequently, the output-to-output gain is independent of $h$.
\end{corollary}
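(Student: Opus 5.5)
The plan is to show directly that, for zero initial conditions, every delayed term in \eqref{sys_fd_delay} vanishes identically. Then the state and both outputs coincide with those of the delay-free system $\dot x=Ax+Ba$, $y_{\rm p}=C_{\rm p}x+D_{\rm p}a$, $y_{\rm r}=C_{\rm r}x+D_{\rm r}a$, in which $h$ does not appear. Working from $t=0$ rather than from $t\ge h$ or $t\ge 2h$ sidesteps the transient intervals of Theorem~\ref{Thm_FD_forced} and Corollary~\ref{Cor_FD_delay_free_outputs}.

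\emph{Step 1 (the state).} With $x(t)=0$ on $[-h,0]$, the variation-of-constants formula relative to $A$ gives, for all $t\ge0$,
\[
x(t)=\int_0^t e^{A(t-s)}\big(A_{\rm d}x(s-h)+Ba(s)\big)\,ds .
\]
Multiplying on the left by $A_{\rm d}$ and using \eqref{fd_cond_exp}, both terms of the integrand are annihilated, since $A_{\rm d}e^{A(t-s)}A_{\rm d}=0$ and $A_{\rm d}e^{A(t-s)}B=0$. Hence $A_{\rm d}x(t)=0$ for all $t\ge0$, and trivially for $t\in[-h,0]$. Thus $A_{\rm d}x(t-h)\equiv0$ for $t\ge0$. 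The solution of \eqref{fd_forced_system} therefore equals $x(t)=\int_0^t e^{A(t-s)}Ba(s)\,ds$, which is independent of $h$. No induction over intervals of length $h$ is needed, because the conditions kill the integrand directly.

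\emph{Step 2 (the outputs).} Hypothesis \eqref{fd_condition_output} is stated with $D=A+A_{\rm d}e^{-Ah}$, so I first convert it to a statement about $A$. The claim is $D^kB=A^kB$ for all $k$, proved by induction. Suppose $D^kB=A^kB$. Then
\[
D^{k+1}B=A^{k+1}B+A_{\rm d}e^{-Ah}A^kB=A^{k+1}B+A_{\rm d}A^ke^{-Ah}B .
\]
Expanding $e^{-Ah}$ as a power series, the last term is a series in $A_{\rm d}A^{m}B$, each of which vanishes by \eqref{fd_cond_state_input} together with Cayley--Hamilton. Hence $e^{Dt}B=e^{At}B$, and \eqref{fd_condition_output} yields $C_{\rm d,p}e^{At}B=0$ and $C_{\rm d,r}e^{At}B=0$.

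Combined with the formula for $x$ from Step 1, this gives $C_{\rm d,p}x(s)=0$ and $C_{\rm d,r}x(s)=0$ for $s\ge0$, and these also hold for $s<0$ by the zero initial condition. Consequently
\[
y_{\rm p}(t)=C_{\rm p}x(t)+D_{\rm p}a(t),\qquad y_{\rm r}(t)=C_{\rm r}x(t)+D_{\rm r}a(t),\qquad t\ge0,
\]
with $x$ independent of $h$. So both input--output maps $a\mapsto y_{\rm p}$ and $a\mapsto y_{\rm r}$ are independent of $h$.

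\emph{Step 3 (the OOG).} The quantity \eqref{OOG} is defined solely through these two maps over $a\in L_{2e}$ with zero initial data, so it is independent of $h$ as well.

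The only step I expect to need care is the conversion between the $D$-based output condition and the $A$-based one in Step 2. There one must check that $e^{-Ah}$ commutes with powers of $A$ and that the resulting series terms $A_{\rm d}A^{m}B$ all vanish, which follows from Cayley--Hamilton exactly as in the proof of Theorem~\ref{Thm_FD_forced}.
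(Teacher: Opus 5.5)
Your proof is correct, and it takes a different, coordinate-free route from the paper's. The paper passes to the FD coordinates $v=S^{-1}x$ of Remark~\ref{Rem_FD_block} with $S^{-1}B=[0;\,B_2]$. Zero initial data then force $v_1\equiv0$, so the delayed term $\bar W v_1(t-h)$ disappears and what remains is the $h$-free subsystem $\dot v_2=Vv_2+B_2a$. For the outputs, the paper expresses $x(t-h)$ through $e^{D(\cdot)}$ and applies $C_{\rm d,p}e^{Dt}B=0$ directly. You instead rerun the variation-of-constants computation of Theorem~\ref{Thm_FD_forced} from $t=0$ against the zero history, which yields $A_{\rm d}x\equiv0$ with no transient. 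You then use $D^kB=A^kB$ to move the output hypothesis from $D$ to $A$.

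Your version works exactly under the stated hypotheses \eqref{fd_cond_exp}. The paper's version needs a transformation with $S^{-1}B=[0;\,B_2]$, which Remark~\ref{Rem_FD_block} offers only as a sufficient structural condition. Such an $S$ does always exist: let the second block span the smallest $A$-invariant subspace containing $\mathrm{Im}\,A_{\rm d}$ and $\mathrm{Im}\,B$, which lies in $\ker A_{\rm d}$. The paper does not say so, however. It also tacitly uses the reduced dynamics $\dot x=Dx+Ba$ from $t=0$ rather than from $t\ge h$.

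Your Step~2 also makes explicit something the paper leaves implicit. The output condition of Corollary~\ref{Cor_FD_delay_free_outputs} is written with the $h$-dependent matrix $D$, but it is actually independent of $h$. So the hypotheses hold for every $h$ once they hold for one, which is what gives the phrase ``independent of $h$'' its meaning. The paper's route, for its part, exhibits the reduced subsystem $(V,B_2,\widehat C_{\rm p},\widehat C_{\rm r})$, which is reused in the later remarks on protected and vulnerable configurations.
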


\begin{pf}Following Remark~\ref{Rem_FD_block}, introduce the FD coordinates
\[
v(t)=S^{-1}x(t)
=
\begin{bmatrix}
v_1(t)\\
v_2(t)
\end{bmatrix}.
\]
In these coordinates, the forced delayed system becomes
\[
\begin{aligned}
\dot v_1(t)&=Uv_1(t),\\
\dot v_2(t)&=Wv_1(t)+Vv_2(t)+\bar Wv_1(t-h)+B_2a(t).
\end{aligned}
\]
Since $v(0)=S^{-1}x(0)=0$, we have
$
v_1(t)\equiv0, t\ge0,
$
and the lower subsystem reduces to
\[
\dot v_2(t)=Vv_2(t)+B_2a(t).
\]
Hence the state component generated by the perturbation is independent of the delay value $h$.

Now, consider the performance output
\[
y_{\rm p}(t)
=
C_{\rm p}x(t)+C_{{\rm d},p}x(t-h)+D_{\rm p}a(t).
\]
Since for zero initial conditions
\[
x(t-h)
=
\int_{0}^{t-h}e^{D(t-h-s)}Ba(s)\,ds,
\]
from $C_{\rm d,p}e^{Dt}B=0$, we  obtain
\[
C_{{\rm d},p}x(t-h)
=
\int_{0}^{t-h}
C_{{\rm d},p}e^{D(t-h-s)}Ba(s)\,ds
=0.
\]
Consequently, the performance output reduces to
\[
y_{\rm p}(t)=C_{\rm p}x(t)+D_{\rm p}a(t),
\]
and this expression is independent of $h$, since $x(t)$ is generated by the $h$-independent dynamics.

The same argument applies to the residual output.
Thus, both output maps $a\mapsto y_{\rm p}$ and $a\mapsto y_{\rm r}$ are independent of the delay. Since the OOG is defined only in terms of the $L_2$ energies of these outputs, $OOG$ is independent of $h$. \qed
\end{pf}

Corollary~\ref{Cor_FD_delay_invariance} significantly simplifies the performance analysis problem. In particular, once the FD-reducibility conditions are satisfied, the delay value itself no longer affects the OOG. Therefore, there is no need to introduce a separate FD-reduced performance metric, and the original OOG can be computed directly from the finite-dimensional representation using the delay-free framework of \cite{Teixeira2021}. 

The FD coordinates also provide additional structural insight into how perturbation channels affect the system. In particular, they allow one to identify perturbation directions that are harmless from a performance perspective and, conversely, configurations that may be particularly vulnerable to stealthy attacks. This observation is formalized in the following remarks.

\begin{Remark}[Protected configurations]\label{Rem_Spec}
The FD structure naturally allows one to identify perturbation channels that do not degrade the performance output.
Assume that the conditions of Corollary~\ref{Cor_FD_delay_invariance} hold, so that, for the purpose of OOG analysis, the delayed system is equivalent to the delay-free subsystem
\[
\dot v_2(t)=Vv_2(t)+B_2a(t).
\]
Let
$
y_{\rm p}(t)=\widehat C_{\rm p}v_2(t)+D_{\rm p}a(t)$,
$
y_{\rm r}(t)=\widehat C_{\rm r}v_2(t)+D_{\rm r}a(t),
$
where
$
\widehat C_{\rm p}=C_{\rm p}S
\begin{bmatrix}
0\\
I
\end{bmatrix}$,
$
\widehat C_{\rm r}=C_{\rm r}S
\begin{bmatrix}
0\\
I
\end{bmatrix}.
$

A protected configuration is obtained when the admissible perturbation channels do not affect the performance output. This occurs if
\[
\widehat C_{\rm p}V^kB_2=0,
\qquad
D_{\rm p}=0,
\qquad
k=0,1,\ldots,n_V-1.
\]
Then, for zero initial conditions,
$
y_{\rm p}(t)\equiv0
$
for every perturbation $a(t)$, i.e., these perturbations do not degrade the performance output. If, at the same time,
\[
\widehat C_{\rm r}V^kB_2\neq0
\quad
\text{for some }k,
\]
or
$
D_{\rm r}\neq0,
$
then the same perturbations remain visible in the residual output and can be detected.

\end{Remark}

\begin{Remark}[Vulnerable configurations]
\label{Rem_FD_observer}
Theoretically, one may also consider the opposite, vulnerable configuration, in which the admissible perturbation channels are invisible from the residual output while still affecting the performance output. This occurs if
\[
\widehat C_{\rm r}V^kB_2=0,
\qquad
D_{\rm r}=0,
\qquad
k=0,1,\ldots,n_V-1,
\]
while
$
\widehat C_{\rm p}V^kB_2\neq0$
%
or
$
D_{\rm p}\neq0.
$
In this case,
$
y_{\rm r}(t)\equiv0,
$
whereas
$
y_{\rm p}(t)\not\equiv0,
$
implying an unbounded OOG.

However, such vulnerable configurations are difficult to realize within observer-based architectures. Indeed, perturbations entering the measurement channel are directly visible in the residual output through the feedthrough term
$
D_{\rm r}=\Gamma_{\rm y}.
$
On the other hand, actuator or physical perturbations remove this direct residual feedthrough, but generally either violate the FD-reducibility condition
$
A_{\rm d}A^kB=0,
$
or become visible in the residual output through the closed-loop dynamics. Therefore, observer-based FD-reducible systems are naturally associated with protected configurations rather than vulnerable ones.
\end{Remark}

Summarizing the above discussion, the FD-reducibility framework should be viewed as a special structural case rather than a generic analysis tool. Although the required algebraic conditions are restrictive and are typically difficult to satisfy for general observer-based control architectures, they provide valuable structural insight when they hold. In particular, FD-reducibility not only yields an exact finite-dimensional representation and a computationally efficient OOG analysis, but also reveals which communication channels should be protected in order to preserve both performance and detectability.

\section{Numerical example}

In this section, we demonstrate how communication delays can affect the output-to-output gain and increase the impact of external perturbations while maintaining a small residual output.

Consider the system \eqref{sys_lin}--\eqref{obs_lin} with the parameters
\begin{equation}\label{sys_ex}
\begin{aligned}
A_{\rm p} & =
\begin{bmatrix}
1&-2&-1\\
0&-0.5&0\\
0&0&-0.1
\end{bmatrix},
\qquad
B_{\rm p}=
\begin{bmatrix}
0\\
1\\
1
\end{bmatrix},
\\
C_{\rm m,o} &=
\begin{bmatrix}
1&0&0\\
0&0&1
\end{bmatrix},
\qquad
C_{\rm p,o}=
\begin{bmatrix}
0&1&0
\end{bmatrix},
\qquad
D_{\rm p,o}=1.
\end{aligned}
\end{equation}
The performance output is $y_{\rm p}=x_2+\tilde u$, while the measured output is
$
y_{\rm m}
=
\begin{bmatrix}
x_1\\
x_3
\end{bmatrix}.
$

The controller and observer gains are chosen as
$
L=[2.43,\,-3.24,\,-0.66],
$
and
$
K=
\begin{bmatrix}
3&0\\
-1&0\\
0&0.9
\end{bmatrix},
$
such that the eigenvalues of the matrices
$
A_{\rm p}+B_{\rm p}L
$
and
$
A_{\rm p}-KC_{\rm m,o}
$
are equal to
$
\{-1,-2,-0.5\}.
$

Assume that the perturbation affects only the second measurement channel:
$
\Gamma_{\rm y}
=
\begin{bmatrix}
0\\
1
\end{bmatrix},
$
while
$
\Gamma_{\rm u}=0$,
$
\Gamma_{\rm p}=\begin{bmatrix}
0\\
0\\
0
\end{bmatrix}.
$

For the delay-free system, the output-to-output gain can be computed using the  framework of \cite{Teixeira2021}, yielding $OOG=2.007$.

For the time-delay case, the upper bound of the output-to-output gain can be computed using the LMI conditions of Theorem~1. Figure~\ref{h_OOG_3D} illustrates the obtained OOG bound as a function of the measurement ($h_{\rm y}$) and actuation ($h_{\rm u}$) delays in the constant-delay case. As expected, the OOG increases with both delays, indicating that communication delays may significantly amplify the impact of external perturbations while maintaining a small residual output. It is worth noting that for $h_{\rm u}=h_{\rm y}=0$, the proposed framework yields OOG=2.007, which exactly coincides with the result obtained using the delay-free framework. 

\begin{figure}
\centering
\includegraphics[width=3.5in]{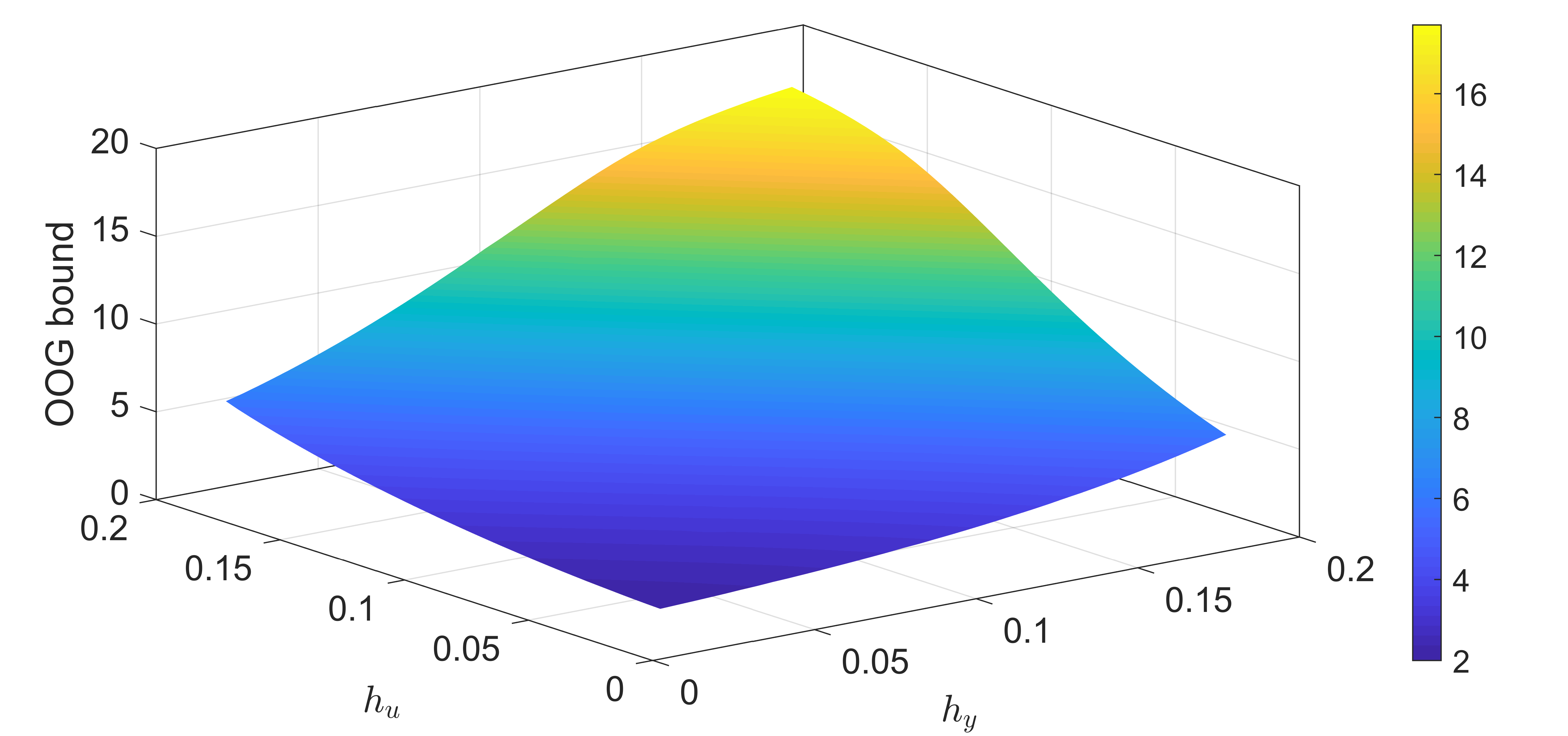}
\caption{Upper bound on the output-to-output gain obtained from Theorem~1 as a function of the constant delays $h_{\rm u}$ and $h_{\rm y}$.}
\label{h_OOG_3D}
\end{figure}

Next, we investigate the influence of slowly varying delays. Figure~\ref{h_OOG_same_stable} shows the OOG bound for the case $h_{\rm u}=h_{\rm y}=h$ and several values of the delay variation parameter $d$. One can observe that moderate delay variations have only a limited influence on the obtained estimate. In particular, the OOG bound is primarily determined by the nominal delay value $h$, while increasing the admissible delay derivative results only in a relatively small increase of the estimate. Therefore, for slowly varying delays, the constant-delay case provides a good indication of the overall system behavior.

\begin{figure}
\centering
\includegraphics[width=3.5in]{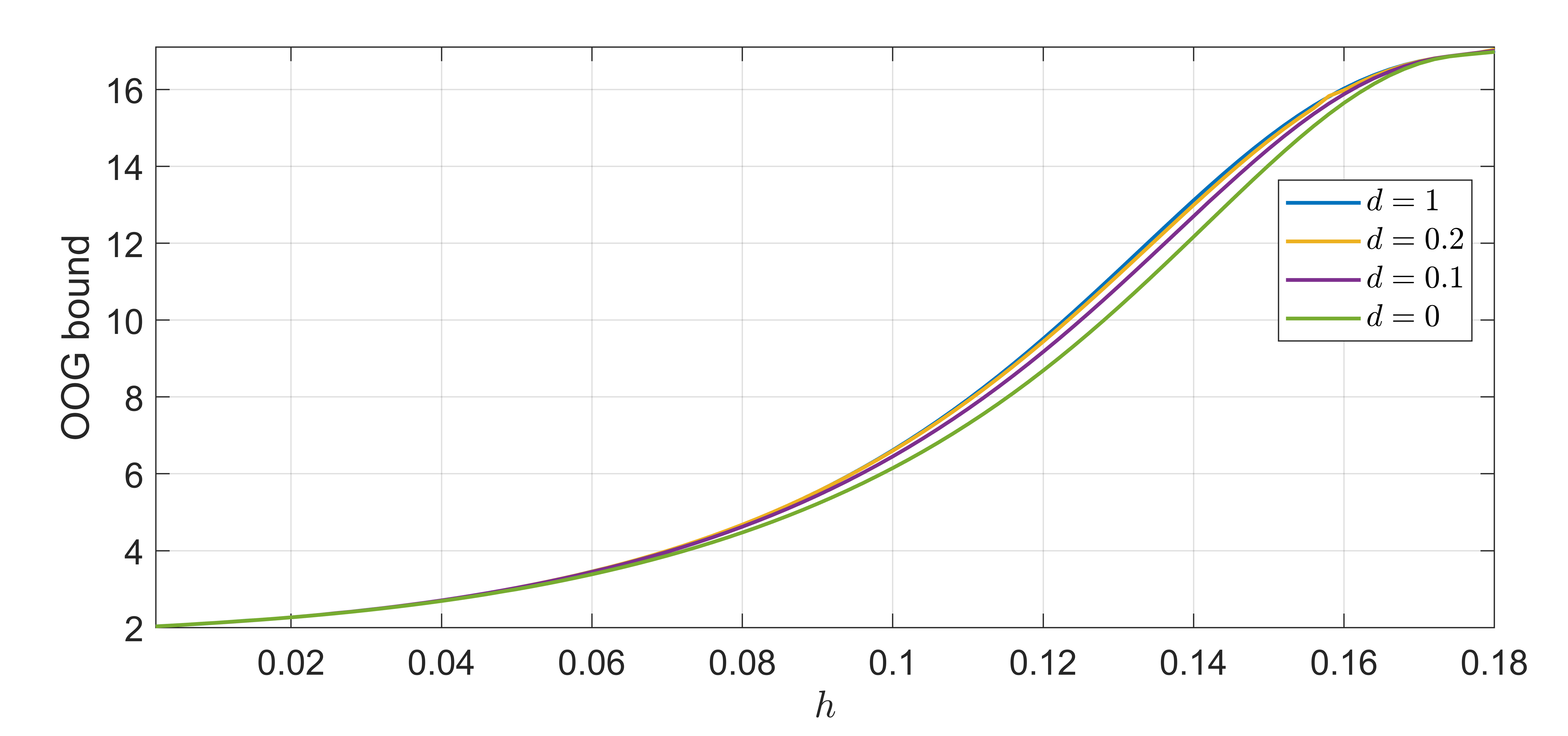}
\caption{OOG bound for equal delays ($h_{\rm u}=h_{\rm y}=h$) and different values of the delay variation parameter $d$.}
\label{h_OOG_same_stable}
\end{figure}

To illustrate the practical impact of delays, we next construct a perturbation signal that approximately maximizes the output-to-output gain.
For the delay-free case, a suboptimal perturbation signal $a(t)$ can be obtained by constructing a sampled-data approximation of the closed-loop system, see \cite{SeifullaevIFAC26}. Taking sampling period $t_s=0.1$ and finite horizon $T=15$, we define
\[
{\bf a}
=
[a[0]^{\T},\ldots,a[N]^{\T}]^{\T},
\]
and similarly
\[
{\bf y}_{\rm p}
=
[y_{\rm p}[0]^{\T},\ldots,y_{\rm p}[N]^{\T}]^{\T},
\,\,\,
{\bf y}_{\rm r}
=
[y_{\rm r}[0]^{\T},\ldots,y_{\rm r}[N]^{\T}]^{\T}.
\]
Then there exist matrices $\mathcal T_{\rm p}$ and $\mathcal T_{\rm r}$ such that
\[
{\bf y}_{\rm p}
=
\mathcal T_{\rm p}{\bf a},
\qquad
{\bf y}_{\rm r}
=
\mathcal T_{\rm r}{\bf a}.
\]
The OOG problem can therefore be approximated as
\begin{equation}\label{OOG_appr_ex}
\sup_{{\bf a}}
{\bf a}^{\T}
\mathcal T_{\rm p}^{\T}
\mathcal T_{\rm p}
{\bf a},
\qquad
\text{s.t.}
\qquad
{\bf a}^{\T}
\mathcal T_{\rm r}^{\T}
\mathcal T_{\rm r}
{\bf a}
\le1.
\end{equation}
This problem can be rewritten as a generalized eigenvalue problem and solved efficiently. The resulting perturbation signal
$
a(t)=a_{\rm y}(t)=\mathrm{ZOH}({\bf a})
$
is shown in Figure~\ref{subopt_a_delay}.
\begin{figure}
\centering
\includegraphics[width=3.5in]{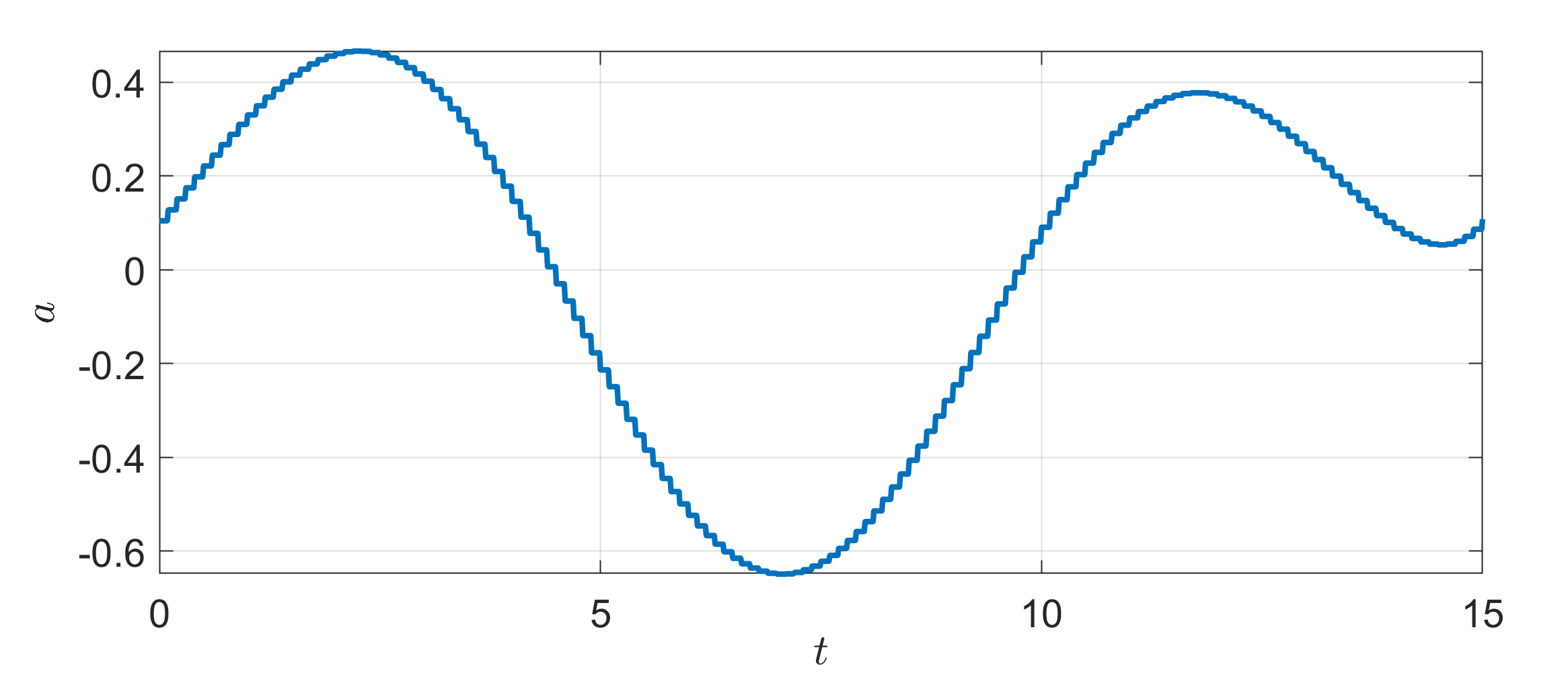}
\caption{Suboptimal perturbation signal obtained from \eqref{OOG_appr_ex}.}
\label{subopt_a_delay}
\end{figure}

In the time-delay case, however, constructing an optimal perturbation signal becomes considerably more challenging due to the infinite-dimensional nature of the dynamics. To overcome this difficulty, we employ the Pad\'e approximation. The delay operators are replaced by finite-dimensional Pad\'e realizations, resulting in an augmented delay-free system. The generalized eigenvalue approach can then be applied to this approximating model to obtain a perturbation signal adapted to the presence of delays. Since Figure~\ref{h_OOG_same_stable} indicates that slowly varying delays produce estimates close to those of the constant-delay case, the resulting signal may also be viewed as a reasonable suboptimal perturbation for systems with small delay variations.

Figures~\ref{energy_h01} and \ref{energy_h005} show the accumulated energies of the performance and residual outputs for the original delayed system when the perturbation signal obtained from the Pad'e approximation is applied. 
For $h_{\rm u}=h_{\rm y}=0.1$, the obtained numerical value of the output-to-output gain is approximately 3.08, while the LMI-based estimate is 6.1. For $h_{\rm u}=h_{\rm y}=0.05$, the numerical value decreases to approximately 2.4, whereas the corresponding estimate is 2.95. Thus, for small delays, the proposed Lyapunov--Krasovskii conditions provide a remarkably accurate estimate of the actual performance degradation. As the delays increase, the gap between the numerical value and the theoretical upper bound naturally becomes larger, reflecting the conservatism introduced by the Lyapunov-based analysis. Nevertheless, the obtained estimates remain informative and correctly capture the substantial increase of the output-to-output gain caused by communication delays.
\begin{figure}
\centering
\includegraphics[width=3.5in]{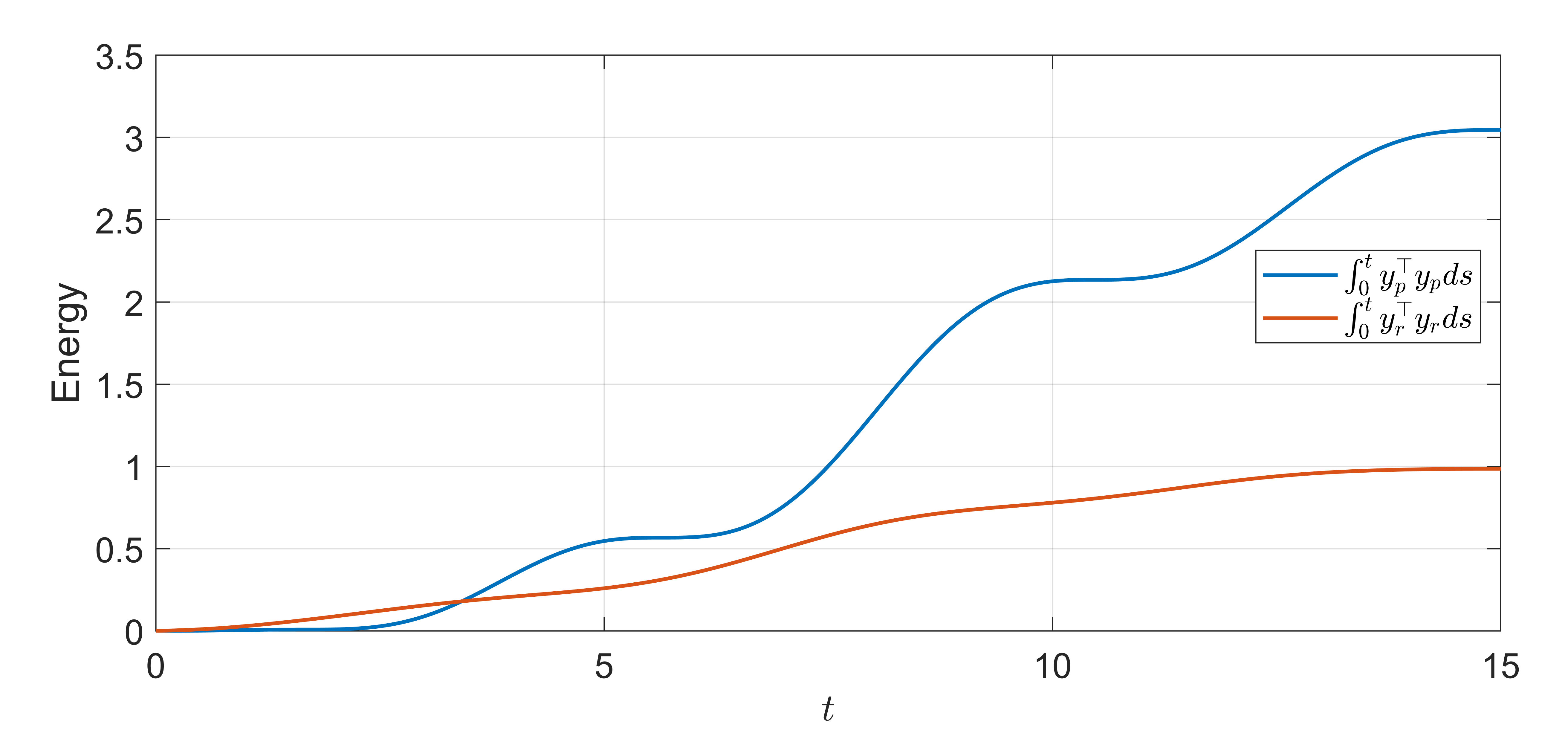}
\caption{Accumulated performance and residual output energies for $h_{\rm u}=h_{\rm y}=0.1$. The resulting numerical OOG is approximately 3.08.}
\label{energy_h01}
\end{figure}
\begin{figure}
\centering
\includegraphics[width=3.5in]{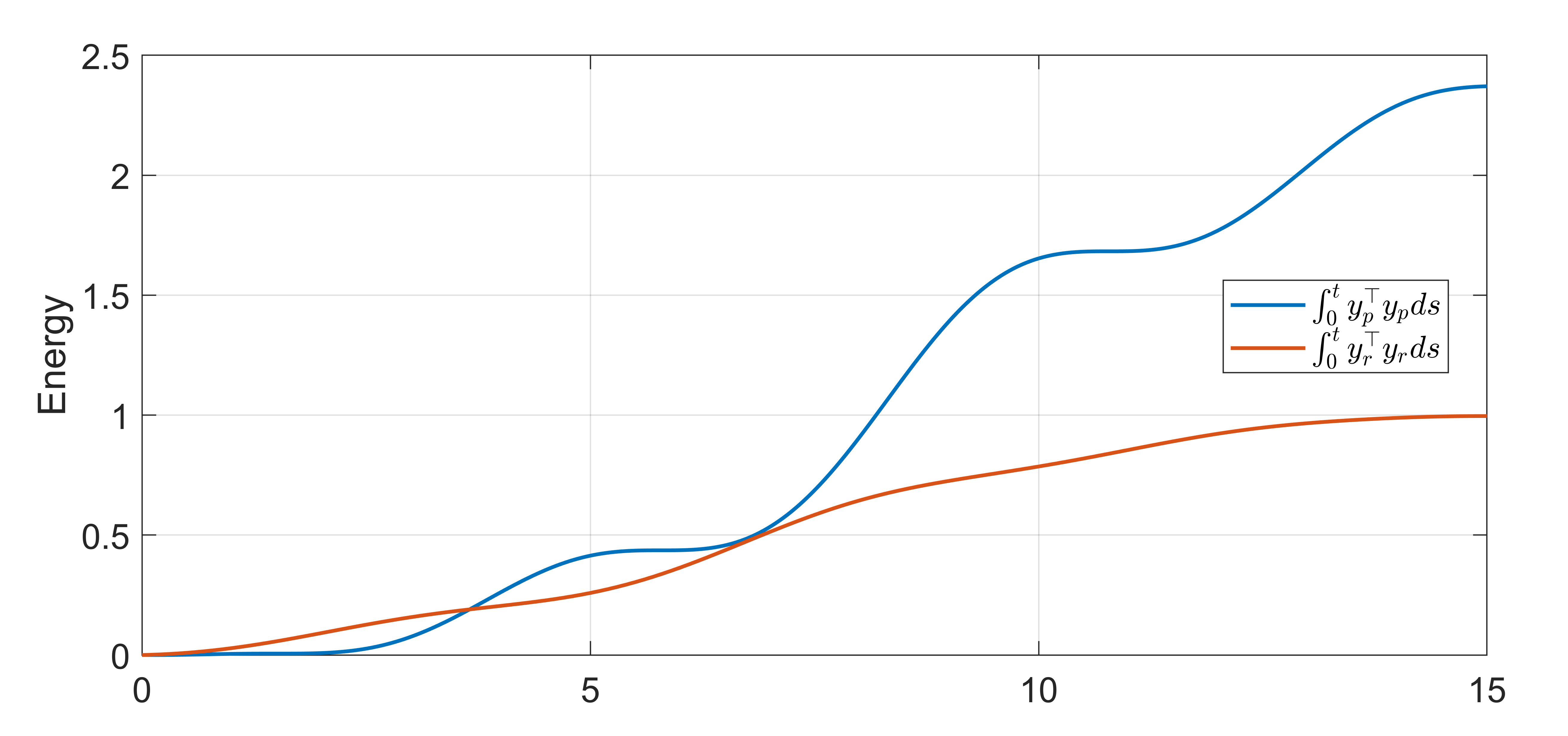}
\caption{Accumulated performance and residual output energies for $h_{\rm u}=h_{\rm y}=0.05$. The resulting numerical OOG is approximately 2.4.}
\label{energy_h005}
\end{figure}

\subsection{Finite-dimension reducible system}
Finally, we illustrate the FD-reducible case. Unlike the preceding example, the goal here is not to demonstrate the degradation of the OOG caused by delays. Indeed, under the FD-reducibility conditions developed above, the OOG becomes independent of the delay value. Instead, the FD framework provides additional structural insight into how perturbation channels propagate through the delayed closed-loop system.
In particular, the FD coordinates naturally reveal which communication channels should be protected and which admissible perturbations remain detectable without degrading performance. Next, we construct a simple observer-based example illustrating this protected configuration.

Let
$
A_{\rm p}=
\begin{bmatrix}
-1&0\\
1&-2
\end{bmatrix}$,
$B_{\rm p}=
\begin{bmatrix}
1\\
0
\end{bmatrix}$,
$C_{\rm m,o}=
\begin{bmatrix}
1&1
\end{bmatrix}$,
$C_{\rm p,o}=
\begin{bmatrix}
1&0
\end{bmatrix}$,
$D_{\rm p,o}=0,
$
and choose
$
L=
\begin{bmatrix}
0.5&0
\end{bmatrix}$,
$K=
\begin{bmatrix}
0\\
1
\end{bmatrix}.
$
Consider the measurement-delay-only case, i.e.,
$
\tau_{\rm u}=0$, $\tau_{\rm y}=h$.
Assume that the perturbation can affect only the measurement channel:
$
\Gamma_{\rm y}=1,
\Gamma_{\rm u}=0,
\Gamma_{\rm p}=0.
$

Since the actuation delay is absent, its contribution is absorbed into the delay-free matrix. Hence,
\[
A=
\begin{bmatrix}
A_{\rm p}+B_{\rm p}L & -B_{\rm p}L\\
KC_{\rm m,o} & A_{\rm p}-KC_{\rm m,o}
\end{bmatrix}
=
\begin{bmatrix}
-0.5&0&-0.5&0\\
1&-2&0&0\\
0&0&-1&0\\
1&1&0&-3
\end{bmatrix},
\]
and
$
A_{\rm d}
=
\begin{bmatrix}
0&0\\
-KC_{\rm m,o}&0
\end{bmatrix}
=
\begin{bmatrix}
0&0&0&0\\
0&0&0&0\\
0&0&0&0\\
-1&-1&0&0
\end{bmatrix}.
$
The perturbation input matrix is
$
B = \begin{bmatrix}
B_{\rm p}\Gamma_{\rm u}&0&\Gamma_{\rm p}\\
B_{\rm p}\Gamma_{\rm u}&-K\Gamma_{\rm y}&\Gamma_{\rm p}
\end{bmatrix},
=
\begin{bmatrix}
0&0&0\\
0&0&0\\
0&0&0\\
0&-1&0
\end{bmatrix}.
$
Then, we obtain that the system is already in the FD block form with \(S=I\):
$
A=
\begin{bmatrix}
U&0\\
W&V
\end{bmatrix}$,
$
A_{\rm d}=
\begin{bmatrix}
0&0\\
\bar W&0
\end{bmatrix}$,
$
B=
\begin{bmatrix}
0\\
B_2
\end{bmatrix}.
$

The corresponding output matrices are
$
C_{\rm p}=
\begin{bmatrix}
1&0&0&0
\end{bmatrix}$,
$
C_{\rm d,p}=0$,
$
D_{\rm p}=0,
$
and
$
C_{\rm r}=
\begin{bmatrix}
-C_{\rm m,o}&C_{\rm m,o}
\end{bmatrix}
=
\begin{bmatrix}
-1&-1&1&1
\end{bmatrix},
$
$
C_{\rm d,r}=
\begin{bmatrix}
C_{\rm m,o}&0
\end{bmatrix}
=
\begin{bmatrix}
1&1&0&0
\end{bmatrix}$,
$
D_{\rm r}=1.
$
Since \(C_{\rm d,p}=0\), the output reducibility condition for the performance output holds trivially. Moreover, because \(B\) acts only on the lower FD block and \(C_{\rm d,r}\) acts only on the upper FD block, we have
$
C_{\rm d,r}D^kB=0,
k=0,1,2,3.
$
Thus, the output conditions of Corollary~\ref{Cor_FD_delay_free_outputs} are satisfied.

Finally, following Remark~\ref{Rem_Spec}, the reduced performance and residual output matrices are
$
\widehat C_{\rm p}
=
C_{\rm p}
\begin{bmatrix}
0\\
1
\end{bmatrix}
=0$,
$
\widehat C_{\rm r}
=
C_{\rm r}
\begin{bmatrix}
0\\
1
\end{bmatrix}
=1.
$
Since \(D_{\rm p}=0\), the admissible perturbation does not affect the performance output, i.e.,
$
y_{\rm p}(t)\equiv0
$
for zero initial conditions. At the same time, since \(\widehat C_{\rm r}\neq0\) and \(D_{\rm r}=1\), the perturbation remains visible in the residual output. Hence this example represents a protected configuration: the remaining admissible perturbation channel does not degrade performance, while its effect appears in the residual signal.

\section{Conclusions}
This paper addressed the performance analysis of observer-based networked control systems with communication delays using the output-to-output gain framework. By extending dissipativity-based OOG analysis to time-delay systems, delay-dependent LMI conditions were derived using Lyapunov--Krasovskii functionals, descriptor techniques, and reciprocally convex inequalities. These conditions provide explicit upper bounds on the OOG for systems with independent delays in measurement and actuation channels. In addition, two special classes of constant-delay systems were investigated: Pad\'e-approximated systems and finite-dimension reducible systems, both of which admit finite-dimensional OOG analysis under suitable assumptions.
\bibliography{references}

@ARTICLE{Sandberg15,
  author={Sandberg, H. and Amin, S. and Johansson, K.H.},
  journal={IEEE Control Systems Magazine}, 
  title={Cyberphysical Security in Networked Control Systems: An Introduction to the Issue}, 
  year={2015},
  volume={35},
  number={1},
  pages={20-23}
  }

@ARTICLE{c3,
    author ={E. Fridman},
    title ={A refined input delay approach to sampled-data control},
    journal = {Automatica}, volume = 46, year = 2010, pages ={421--427}}

@BOOK{chen,
    author = "Gu, K. and Kharitonov, V. and  Chen, J.",
    title = {Stability of time-delay systems},
    address = {Boston},
    publisher = "Birkh{\"a}user",
    year = 2003}

@ARTICLE{IJRNC15,
    author ={Seifullaev, R.E. and Fradkov, A.L.},
    title ={Robust nonlinear sampled-data system analysis based on {F}ridman's method and {S}-procedure},
    journal = {International Journal of Robust and Nonlinear Control}, volume = 26, year = 2016, pages ={201--217}}

@BOOK{Khalil,
    author = "Khalil, Hassan K.",
    title = {Nonlinear Systems},
    publisher = "Prentice Hall PTR",
    year = 2002}

@book{Fridman_book,
	author={E.~Fridman},
	title={Introduction to Time-Delay Systems: Analysis and Control},
	publisher={Birkh{\"a}user},
	year={2014}
	}

@INPROCEEDINGS{Cardenas08,
  author={Cardenas, A. A. and Amin, S. and Sastry, Sh.},
  booktitle={the 28th International Conference on Distributed Computing Systems Workshops}, 
  title={Secure Control: Towards Survivable Cyber-Physical Systems}, 
  year={2008},
  volume={},
  number={},
  pages={495-500},
  }

@INPROCEEDINGS{Churilov13,
  author={Churilov, Alexander and Medvedev, Alexander and Mattsson, Per},
  booktitle={52nd IEEE Conference on Decision and Control}, 
  title={Finite-dimensional reducibility of time-delay systems under pulse-modulated feedback}, 
  year={2013},
  volume={},
  number={},
  pages={2078-2083},
 }

@article{Park2011,
title = {Reciprocally convex approach to stability of systems with time-varying delays},
journal = {Automatica},
volume = {47},
number = {1},
pages = {235-238},
year = {2011},
author = {PooGyeon Park and Jeong Wan Ko and Changki Jeong},
}

@BOOK{Zhou96,
  title={Robust and Optimal Control},
  author={Zhou, K and Doyle, J C and Glover, K},
  year={1996},
  publisher={Prentice-Hall, Inc.}
}

@article{WANG07,
title = {An {LMI} approach to ${H}_-$ index and mixed ${H}_-/{H}_\infty$ fault detection observer design},
journal = {Automatica},
volume = {43},
number = {9},
pages = {1656-1665},
year = {2007},
author = {Jian Liang Wang and Guang-Hong Yang and Jian Liu},
}

@Inbook{Teixeira2013,
author="Teixeira, A. M. H.
and Sou, K. Ch.
and Sandberg, H.k
and Johansson, K. H.",
editor="Tarraf, D. C.",
title="Quantifying Cyber-Security for Networked Control Systems",
bookTitle="Control of Cyber-Physical Systems",
year="2013",
publisher="Springer International Publishing",
address="Heidelberg",
pages="123--142"
}

@INPROCEEDINGS{Teixeira2015,
  author={Teixeira, A. M. H. and Sandberg, H. and Johansson, K. H.},
  booktitle={54th IEEE Conference on Decision and Control (CDC)}, 
  title={Strategic stealthy attacks: The output-to-output $l_2$-gain}, 
  year={2015},
  pages={2582--2587},
}

@incollection{Teixeira2021,
author="A. M. H. Teixeira",
editor="R. M. Ferrari and A. M. H. Teixeira",
title="Security metrics for control systems",
booktitle="Safety, Security and Privacy for Cyber-Physical Systems",
year="2021",
publisher="Springer International Publishing",
}

@article{Yamalova15,
  title={Finite-Dimensional Hybrid Observer for Delayed Impulsive Model of Testosterone Regulation},
  author={Yamalova, Diana and Churilov, Alexander and Medvedev, Alexander},
  journal={Mathematical Problems in Engineering},
  volume={2015},
  number={1},
  pages={190463},
  year={2015},
  publisher={Wiley Online Library}
}

@INPROCEEDINGS{SeifullaevIFAC26,
  title={Impact analysis of hidden faults in nonlinear control systems using output-to-output gain},
  author={Seifullaev, Ruslan and Teixeira, Andr{\'e}},
  booktitle = {23rd IFAC World Congress (accepted)}, 
  year={2026}
}

@BOOK{Kharitonov13,
    author={Kharitonov,V. L.},
   title ={Time-Delay Systems. Lyapunov Functions and Matrices}, publisher={Birkhauser},
   year= 2013}

@article{Richard03,
title = {Time-delay systems: an overview of some recent advances and open problems},
journal = {Automatica},
volume = {39},
number = {10},
pages = {1667-1694},
year = {2003},
author = {Jean-Pierre Richard},
}

@ARTICLE{24_Mousavinejad,
  author={Mousavinejad, E. and Yang, F. and Han, Q.-L. and Vlacic, L.},
  journal={IEEE Transactions on Cybernetics}, 
  title={A Novel Cyber Attack Detection Method in Networked Control Systems}, 
  year={2018},
  volume={48},
  number={11},
  pages={3254-3264},
  }

@article{ZHAO2020109128,
title = {A dynamic event-triggered approach to observer-based PID security control subject to deception attacks},
journal = {Automatica},
volume = {120},
pages = {109128},
year = {2020},
author = {Di Zhao and Zidong Wang and Guoliang Wei and Qing-Long Han}}

\end{document}